\documentclass[a4paper,11pt]{amsart}
\usepackage[T1]{fontenc}
\usepackage[utf8]{inputenc}
\usepackage{lmodern}
\usepackage{amsthm}

\usepackage{amssymb,amsmath}
\usepackage{latexsym}
\usepackage[usenames,dvipsnames]{color}
\usepackage[all]{xy}
\usepackage{graphicx}
\usepackage{mathrsfs}
\usepackage{stmaryrd}

\usepackage{paralist}
\usepackage{stackrel}

\newenvironment{lemmalist}{\begin{compactenum}[\itshape i.)]}{\end{compactenum}}
\setlength{\oddsidemargin}{5mm}
\setlength{\evensidemargin}{5mm}
\textwidth 6.0in

\DeclareMathOperator{\image} {\mathrm{im}}
\newcommand{\Cinfty}{\mathscr{C}^{\infty}}

\newcommand{\st}{^*}

\newcommand{\g}{{\mathfrak{g}}{}}

\newcommand{\hc}{\mathsf{h}}
\newcommand{\LL}{\mathscr{L}}
\newcommand{\bbN}{\mathbb{N}}

\newcommand{\norm}{{\|}}
\DeclareMathOperator{\dd}{d}
\newcommand{\Hom}            {\operatorname{\mathsf{Hom}}}
\newcommand{\Diff}            {\operatorname{\mathsf{Diff}}}
\newcommand{\Vect}            {\operatorname{\mathsf{Vect}}}
\newcommand{\Poly}            {\operatorname{\mathsf{Poly}}}

\DeclareMathOperator{\id}    {\mathsf{id}}
\newcommand\ot[2]{\stackrel{#1}{#2}}

\newtheorem{theorem*}{Theorem}
\newtheorem{theorem}{Theorem}[section]
\newtheorem{lemma}[theorem]{Lemma}
\newtheorem{metatheorem}[theorem]{Metatheorem}
\newtheorem{corollary}[theorem]{Corollary}
\newtheorem{proposition}[theorem]{Proposition}
\theoremstyle{definition}
\newtheorem{definition}[theorem]{Definition}
\theoremstyle{remark}
\newtheorem{remark}[theorem]{Remark}
\newtheorem{example}[theorem]{Example}
\numberwithin{equation}{section}

\title{ Rigidity of infinitesimal momentum maps}
\author{Chiara Esposito}
\address{Faculty of Mathematics and Computer Science,
Department of Mathematics,
Chair of Mathematics X (Mathematical Physics), Julius Maximilian University of W\"{u}rzburg, Germany }
\email{chiara.esposito@mathematik.uni-wuerzburg.de}
\author{Eva Miranda}
\address{Departament de Matem\`{a}tica Aplicada I,
 Universitat Polit\`{e}cnica de Catalunya and BGSMath, Barcelona, Spain }
\email{eva.miranda@upc.edu}
\thanks{Eva Miranda is partially supported by Ministerio de Econom\'{i}a y Competitividad project GEOMETRIA ALGEBRAICA, SIMPLECTICA, ARITMETICA Y APLICACIONES with reference: MTM2012-38122-C03-01/FEDER and by the European Science Foundation network CAST}
\begin{document}

\maketitle
\date{\today}

\begin{abstract}
  In this paper we prove rigidity theorems for Poisson Lie group
  actions on Poisson manifolds. In particular, we prove
  that close infinitesimal momentum maps associated to
  Poisson Lie group actions are equivalent using a normal form theorem for SCI spaces.
  When the Poisson structure of the acted manifold is integrable,
  this yields rigidity also for lifted actions to the symplectic
  groupoid.
\end{abstract}

\section{Introduction}
\label{sec:intro}

In 1961 Palais proved that close actions of compact Lie groups on
compact manifolds can be conjugated by a diffeomorphism \cite{Palais1961}.
The interest of this rigidity theorem relies on the approximation of actions
by nearby ones. As application of this rigidity theorem of Palais we
can recover normal form theorems such as the linearization theorem by
Bochner \cite{Bochner1945}. Several generalizations of this result
have been obtained in \cite{Ginzburg1992} and \cite{Miranda2007}
for the case of symplectic structures and in \cite{Miranda2012} for
the case of Hamiltonian actions of semisimple Lie algebras on
Poisson manifolds.

In this paper we generalize a rigidity result from \cite{Miranda2012}
to the context of Poisson Lie groups and pre-Hamiltonian actions. This addresses a question considered by Ginzburg in \cite{Ginzburg1996}.

Poisson Lie group actions
on Poisson manifolds with non-trivial Poisson structures
appear naturally in the study of $R$
matrices. As explained in \cite{Kosmann-Schwarzbach2004} this generalization
of Hamiltonian actions is useful to take into account the properties of the
dressing transformations under the \emph{hidden symmetry group}
in the case of $R$-matrices. For these,
the notion of momentum mapping for Poisson manifolds coincides
with the monodromy matrix of the associated linear system.
Thus, rigidity for Poisson Lie group actions can be useful
to understand the stability of the integrable systems associated
to $R$-matrices. On the other hand, it is worthwhile to mention that
the generalization
to Poisson Lie group actions represents a first step towards
quantization of symmetries, as studied in \cite{Esposito2012}.
The first main result of this paper is the rigidity of Poisson Hamiltonian
actions, i.e. Poisson Lie actions generated by a G-equivariant
momentum map $\mu:M\longrightarrow G^*$. More explicitely,
\begin{theorem*}
  Let $G$ be a compact semisimple Poisson Lie group $G$ acting
  on a compact Poisson manifold $(M,\pi)$ in a Hamiltonian
  fashion given by the momentum map $J_0 : M\longrightarrow G^\ast$.

  There exist a positive integer $l$ and two positive real numbers
  $\alpha$ and $\beta$ (with $\beta<1<\alpha$) such that, if $J_1$ is
  another momentum map on $M$ with respect to the same Poisson
  structure and Poisson Lie group, satisfying
  \begin{equation}
      \| J_0 - J_1 \|_{2l-1}
      \leq
      \alpha
      \quad {\mbox { and }} \quad
      \| J_0 - J_1 \|_{l}
      \leq
      \beta
  \end{equation}
  then, there exists a Poisson diffeomorphism $\psi$ of class
  $C^k$, for all $k\geq l$, on $M$ such that
  $J_1\circ \psi =J_0$.
\end{theorem*}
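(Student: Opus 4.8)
The plan is to deduce the theorem from the abstract normal form theorem for SCI spaces announced in the abstract, so that almost all of the work reduces to verifying its hypotheses in the present geometric setting. First I would fix the SCI space: working in a tubular neighbourhood of the image of $J_0$, I identify momentum maps near $J_0$ with $G$-equivariant maps $M \to G^*$ (equivariant for the action on $M$ and the dressing action on $G^*$), completed in the scale of norms $\|\cdot\|_k$ occurring in the statement. On this scale I let the group of Poisson diffeomorphisms of $(M,\pi)$ act by $\psi \cdot J = J \circ \psi$, noting that, because $J$ determines its action through $\widehat{\xi} = \pi^\sharp\langle J^* \theta, \xi\rangle$ (with $\theta$ the Maurer--Cartan form of $G^*$), a Poisson diffeomorphism carries momentum maps to momentum maps. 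The constants $l$, $\alpha$ and $\beta$ (with $\beta<1<\alpha$) are then exactly those furnished by the abstract theorem, and rigidity becomes the assertion that the orbit of $J_0$ absorbs every $J_1$ satisfying the two closeness conditions.

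The heart of the matter is the infinitesimal step, namely the construction of a \emph{tame homotopy operator} inverting the linearised action. I would use the path method: join $J_0$ to $J_1$ by a smooth path $J_t$ --- here the interpolation must be carried out through the group multiplication of $G^*$ rather than linearly, since $G^*$ is not a vector space --- and look for a family $\psi_t$ of Poisson diffeomorphisms, generated by time-dependent Poisson vector fields $X_t$, with $J_t \circ \psi_t = J_0$ and $\psi_0 = \id$. Differentiating in $t$ yields the cohomological equation $\dd J_t (X_t) = -\dot J_t$, to be solved for $X_t$ among Poisson vector fields tangent to the action. The obstruction to solving it lives in a Chevalley--Eilenberg cohomology of $\mathfrak{g}$ with coefficients in functions and one-forms on $M$; since $G$ is semisimple, Whitehead's first and second lemmas force $H^1(\mathfrak{g},V) = H^2(\mathfrak{g},V) = 0$ and so trivialise it, while compactness of $G$ supplies, through averaging against the Haar measure, a splitting that is both $G$-equivariant and tame, i.e. bounded on each rung of the scale with a fixed finite loss of derivatives.

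The main obstacle, and the reason a naive iteration cannot succeed, is precisely this loss of derivatives coupled with the nonlinearity introduced by the non-abelian dual $G^*$: the deformation $J_t$ and the linearisation of the momentum-map condition carry genuinely quadratic correction terms coming from the group multiplication and the dressing action, which are absent in the classical Hamiltonian case treated in \cite{Miranda2012}. The decisive estimates to be established are therefore that the homotopy operator satisfies the requisite interpolation (tame) inequalities and that these error terms are quadratically small in the SCI norms; granting them, the fast Nash--Moser type iteration built into the SCI normal form theorem converges despite the derivative loss. Feeding this data into the abstract theorem then produces a Poisson diffeomorphism $\psi$ with $J_1 \circ \psi = J_0$, of class $C^k$ for every $k \geq l$, the drop from $C^\infty$ to finite-but-arbitrary regularity being the usual footprint of the smoothing operators used in the iteration.
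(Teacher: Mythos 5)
Your proposal sets out to run the full SCI/Nash--Moser machinery directly on maps $M \to G^*$, and this is where it both diverges from the paper and leaves a genuine gap. The paper's proof is a short reduction: by the Ginzburg--Weinstein theorem (Theorem~\ref{thm:ginzburg}), the hypotheses \lq\lq compact semisimple\rq\rq{} guarantee a \emph{global} Poisson diffeomorphism $\Phi: G^* \to \mathfrak{g}^*$ onto the dual of the Lie algebra with its linear Poisson structure. Composing, $\mu_i = \Phi \circ J_i$ are ordinary equivariant momentum maps for standard Hamiltonian actions on $(M,\pi)$, and the rigidity theorem of Miranda--Monnier--Zung for canonical momentum maps (Theorem~\ref{thm:mainglobal}) applies verbatim to produce $\psi$ with $\mu_1 \circ \psi = \mu_0$, hence $J_1 \circ \psi = J_0$. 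All of the Nash--Moser analysis is thus quoted, not redone; the only new ingredient is the observation that closeness of $J_0, J_1$ transfers to closeness of $\mu_0,\mu_1$ under the fixed diffeomorphism $\Phi$.

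The gap in your version is that the entire analytic core is announced rather than established. You correctly identify that the non-abelian target $G^*$ introduces quadratic correction terms from the group multiplication and the dressing action, and that the homotopy operator must be shown to be tame in this nonlinear setting --- but you then write \lq\lq granting them, the iteration converges.\rq\rq{} Those estimates are precisely the hard part, and they are not a routine adaptation of \cite{Miranda2012}: the Chevalley--Eilenberg splitting via Whitehead's lemmas and Haar averaging lives naturally in a \emph{linear} coefficient module, whereas your cochains take values in the nonlinear space of maps to $G^*$; likewise your interpolation of $J_0$ and $J_1$ through the group multiplication of $G^*$ is not shown to stay within the set of momentum maps, so the path method as stated does not obviously make sense. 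None of these difficulties can be waved away, and none of them need to be faced: the semisimplicity and compactness you are already assuming are exactly the hypotheses under which Ginzburg--Weinstein linearizes $G^*$ globally and collapses the problem onto the known linear case. If you wanted a result for Poisson Lie groups that are \emph{not} linearizable, your direct approach would be the natural one to attempt, but then every one of the asserted tame estimates would have to be proved from scratch.
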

In other words, close Poisson Hamiltonian actions are equivalent.
The proof uses  a global linearization theorem due to Ginzburg and
Weinstein \cite{Ginzburg1992} and the rigidity result for
Hamiltonian actions on Poisson manifolds obtained in \cite{Miranda2012}.
As a consequence of this rigidity
theorem for momentum maps in the Poisson Lie group setting,
we obtain rigidity for lifted actions to the symplectic groupoid
when the Poisson structure is integrable.

As pointed out in \cite{Ginzburg1996}, in many cases
it is enough to consider the infinitesimal version of the momentum map.
Poisson actions generated by an infinitesimal momentum map
are what we call pre-Hamiltonian actions.
The advantage of considering the infinitesimal momentum map
relies in the fact that existence and uniqueness are much simpler to prove
and we have a big class of examples given by semisimple Lie algebras.
Infinitesimal momentum maps are
the local counterpart to momentum maps and topology on the
acted manifold is an obstruction to its integration to global
momentum maps. This is also the case when the Poisson structure
on $G$ is not trivial but there are additional obstructions as
shown in \cite{Esposito2012a}.

The main result of this paper is the rigidity of the infinitesimal
momentum map.
\begin{theorem*}
  Let us consider an pre-Hamiltonian action of a semisimple compact Poisson Lie group $(G, \pi_G)$
  on a compact Poisson manifold $(M,\pi)$ with infinitesimal momentum map
  $\alpha$.

  There exist a positive integer $l$ and two positive real numbers
  $a$ and $b$ (with $b<1<a$) such that, if $\tilde{\alpha}$ is
  another infinitesimal momentum map on $M$ with respect to the same Poisson
  structure, satisfying
  \begin{equation}
      \| \alpha - \tilde{\alpha} \|_{2l-1}
      \leq
      a
      \qquad
      \| \alpha - \tilde{\alpha} \|_{l} \leq b
  \end{equation}
  then, there exists a map $\Phi: \Omega^1(M)\to \Omega^1(M)$
  of class $\mathscr{C}^{k}$, for all   $k\geq l$
  preserving the Lie algebra structure of $\Omega^1(M)$
  and the differential $\dd$, such that
  $\Phi(\alpha_X) = \tilde\alpha_X$.
\end{theorem*}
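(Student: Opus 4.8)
The plan is to recast the statement as an instance of the abstract normal form theorem for SCI spaces established in \cite{Miranda2012}, and to verify its cohomological and analytic hypotheses in the present geometric setting. First I would package the data: an infinitesimal momentum map is a linear map $X \mapsto \alpha_X$ from $\g$ to $\Omega^1(M)$, and I regard it as a $1$-cochain in the Chevalley--Eilenberg complex $C^\bullet(\g, \Omega^1(M))$, where $\g$ acts on $\Omega^1(M)$ by the Lie derivative $\LL_{X_M}$ along the fundamental vector fields $X_M$ of the action. The conditions defining a momentum map (compatibility with the Koszul bracket on $\Omega^1(M)$ and with the cobracket of $\pi_G$) translate into a cocycle identity, so that the difference $\beta := \tilde\alpha - \alpha$ of two such maps is, to leading order, a closed $1$-cochain with $\|\beta\|_{2l-1}\le a$ and $\|\beta\|_l\le b$. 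The goal is then to produce an automorphism $\Phi$ of $(\Omega^1(M), [\cdot,\cdot]_\pi, \dd)$, close to the identity, with $\Phi(\alpha_X)=\tilde\alpha_X$.

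The algebraic heart is the solvability of the cohomological equation, i.e. the vanishing $H^1(\g, \Omega^1(M))=0$ together with a homotopy operator obeying tame estimates. Here I would combine two facts. Semisimplicity of $\g$ yields Whitehead's first lemma, which kills first cohomology with coefficients in any module; and compactness of $G$ lets me average the action over $G$ and use the Casimir element to build an explicit contracting homotopy $h$ on the non-invariant part of $\Omega^1(M)$, splitting off the $G$-invariant $1$-forms. This $h$ solves $\dd_\g Y = \beta$ at the infinitesimal level, where $Y$ generates a Poisson vector field on $M$; because $h$ is built from averaging and the invertible Casimir it is bounded in each $C^k$-norm and loses at most one derivative, which is precisely the input the SCI machinery is designed to absorb.

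With the cohomological step in hand I would run the fast-convergence (Nash--Moser/KAM) iteration supplied by the SCI normal form theorem. The scale of $C^k$-norms on $\Omega^1(M)$, equipped with the standard smoothing operators $S_t$ and the interpolation inequalities available on the compact manifold $M$, constitutes an SCI space; at each step I solve the linearized equation with $h$, compose the current conjugation with the time-one flow of the resulting vector field (a Poisson diffeomorphism, whose action on $1$-forms preserves both $[\cdot,\cdot]_\pi$ and $\dd$), and precompose with $S_{t_n}$ to recover the derivative lost by $h$. The quadratic smallness of the new error, measured in the interpolated norms, dominates the derivative loss, so under the hypotheses $\|\beta\|_{2l-1}\le a$ and $\|\beta\|_l\le b$ the sequence of conjugations converges in every $C^k$-norm, $k\ge l$, to a limit $\Phi$ of class $\mathscr{C}^{k}$.

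Finally, since every conjugation in the sequence is the pullback by a Poisson diffeomorphism, each preserves the Koszul bracket and commutes with $\dd$; these are closed conditions, so the limit $\Phi$ inherits them and satisfies $\Phi(\alpha_X)=\tilde\alpha_X$. I expect the main obstacle to be the cohomological step carried out with uniform tame estimates: Whitehead's lemma is purely algebraic and finite-dimensional, so transporting it to the infinite-dimensional module $\Omega^1(M)$ while retaining bounds uniform across the whole scale of $C^k$-norms --- and checking that the homotopy loses only one derivative --- is the delicate point on which the convergence of the iteration, and hence the whole theorem, rests.
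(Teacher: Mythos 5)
Your overall strategy is the paper's: view the infinitesimal momentum maps as cochains in a Chevalley--Eilenberg complex $C^\bullet(\g,\Omega^1(M))$, use compactness and semisimplicity to get homotopy operators with tame estimates, and feed everything into the abstract SCI normal form theorem of \cite{Miranda2012}. The delicate point you flag at the end (tame bounds for the homotopy on the infinite-dimensional module) is indeed handled in the paper by Sobolev norms made $G$-invariant via a Mostow--Palais equivariant embedding. However, there is a genuine gap in your choice of module structure and, more seriously, of the conjugating object. The paper's representation is $\rho_X(\beta)=[\alpha_X,\beta]_\pi$ (the Koszul bracket), not $\LL_{\widehat X}$, and the conjugation at each Newton step is \emph{not} the pullback by the time-one flow of $\pi^\sharp(\hc(\beta))$: it is Ginzburg's Poisson flow $\Phi^*_t$, the fibre-wise linear automorphism of $T^*M$ covering $\phi_{-t}$ whose generator is $\LL_{\hc(\beta)}=[\hc(\beta),\cdot]_\pi$.

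This distinction is not cosmetic, for two reasons. First, $\hc(\beta)$ is a $1$-form that need not be closed, so $\pi^\sharp(\hc(\beta))$ is in general neither Hamiltonian nor even a Poisson vector field; its flow is therefore not a Poisson diffeomorphism and its pullback has no reason to preserve $[\cdot,\cdot]_\pi$, which undercuts your claim that each conjugation, and hence the limit, is a bracket morphism. (This is exactly the point where the present setting departs from the Hamiltonian case of \cite{Miranda2012}, where the generator $X_{S_t(h(\phi))}$ is automatically Hamiltonian.) Second, the quadratic error estimate driving the Newton iteration rests on the identity $\tfrac{\dd}{\dd t}\Phi^*_t\tilde\alpha_i=\Phi^*_t[\hc(\beta),\tilde\alpha_i]_\pi=-\Phi^*_t\,\partial(\hc(\beta))_i$, i.e.\ on the generator of the conjugation matching the CE coboundary through the antisymmetry of $[\cdot,\cdot]_\pi$. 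With the ordinary Lie derivative the discrepancy
\begin{equation*}
[\hc(\beta),\tilde\alpha_i]_\pi-\LL_{\pi^\sharp(\hc(\beta))}\tilde\alpha_i
=-\LL_{\pi^\sharp(\tilde\alpha_i)}\hc(\beta)-\dd\bigl(\pi(\hc(\beta),\tilde\alpha_i)\bigr)
\end{equation*}
is first order in $\beta$, so the error after one step would not be quadratic and the iteration would not converge. This is also why the theorem only delivers a map $\Phi:\Omega^1(M)\to\Omega^1(M)$ rather than a diffeomorphism of $M$, and why the paper's final corollary needs the extra hypothesis that $\alpha_X$ vanish on the symplectic leaves before the conjugation descends to an actual equivalence of actions. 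To repair your argument, replace the pullback step by the Poisson flow of Theorem 4.5 and the representation by $[\alpha_X,\cdot]_\pi$; the rest of your outline then aligns with the paper.
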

The proof uses techniques native to geometrical analysis and
an abstract normal form theorem from \cite{Miranda2012}.
This abstract normal form encapsules a Newton's iterative method
used by Moser and Nash to prove the inverse function theorem in
infinite dimensions (see for example \cite{Hamilton1982}).
Newton's method is used to prove normal form results by
approximating the solution by means of an iterative process.
The solution is then presented as a limit. The abstract normal
form for SCI spaces presented in \cite{Miranda2012} allows to
prove normal forms results (and in particular, linearization
and rigidity theorems) without having to plunge into the
details of the iterative method. In this paper we provide a
new application of this normal form for SCI spaces.
The abstract normal form theorem in \cite{Miranda2012} has had
other applications in the theory of generalized complex manifolds
(see \cite{Bailey2013} and \cite{Bailey2014}) and a variant of it
to normal forms in a neighbourhood of a symplectic leaf  of a
Poisson manifold \cite{Marcut2014}. In this paper we provide a new
application of this normal form for SCI spaces
As in \cite{Miranda2012} we first prove an infinitesimal
rigidity result and then we apply the SCI normal form
theorem to conclude rigidity. Our theorem can be
seen as another reincarnation of Mather's principle
\emph{\lq\lq infinitesimal stability implies stability\rq\rq}
(see \cite{Mather1968} and its sequel).


\section{Preliminaries: Poisson Lie groups and pre-Hamiltonians actions}
\label{sec:mm}

In this section we introduce a generalization of the notion of Hamiltonian actions
in the setting of Poisson Lie groups acting on Poisson manifolds.
Let us recall that a Poisson Lie group is a pair $(G,\pi_G)$, where $G$ is a Lie group
and $\pi_G$ is a multiplicative Poisson structure. The  Lie algebra $\mathfrak{g}$ corresponding
to the Lie group $G$ is equipped with the 1-cocycle,
\begin{equation}
    \delta
    =
    d_e\pi_G : \mathfrak{g}\rightarrow  \mathfrak{g}\wedge \mathfrak{g},
\end{equation}
which defines a Lie algebra structure on the dual vector space $\mathfrak{g}^*$.
For this reason, the pair $(\mathfrak{g} , \delta)$ is said to be a Lie bialgebra.
If $G$ is connected and simply connected there is a one-to-one correspondence
between the Poisson Lie group $(G, \pi_G)$ and the Lie bialgebra $(\mathfrak{g}, \delta)$,
as proven in \cite{Drinfeld1983} (for this reason we assume this
hypothesis to hold throughout this paper). The dual Poisson Lie group $G^*$ is
defined to be the Lie group associated to the Lie algebra $\mathfrak{g}\st$.
Given a Poisson Lie group $(G,\pi_G)$ and a Poisson manifold $(M,\pi)$,
we introduce the following
\begin{definition}
  \label{def:PoissonAction}
  The action of $(G,\pi_G)$ on $(M,\pi)$ is called \textbf{Poisson action}
  if the map $\Phi:G\times M\rightarrow M$ is Poisson, that is
  \begin{equation}
      \lbrace f \circ \Phi, g \circ \Phi \rbrace_{G \times M}
      =
      \lbrace f,g \rbrace_M \circ \Phi \qquad \forall f,g\in \Cinfty(M)
  \end{equation}
  where the Poisson structure on $G\times M$ is given by $\pi_G\oplus\pi$.
\end{definition}
Observe that if $G$ carries the zero Poisson structure $\pi_G=0$, the action is Poisson
if and only if it preserves $\pi$.
In general, when $\pi_G\neq 0$, the structure $\pi$ is not invariant with respect to the action.

Let $\Phi: G \times M \to M$ be a Poisson action and $\widehat{X}$ the fundamental vector field
associated to any element $X \in \mathfrak{g}$. For each $X \in \mathfrak{g}$ we can also define
the left invariant 1-form $\theta_X$ on $G^*$ with value $X$ at $e$.
\begin{definition}[Lu, \cite{Lu1990}, \cite{Lu1991}]
  \label{def:MomentumMap}
  A \textbf{momentum map} for the Poisson action $\Phi: G\times M\rightarrow M$
  is a map $J: M\rightarrow G^*$ such that
  \begin{equation}
    \label{eq:MomentumMap}
      \widehat{X}
      =
      \pi^{\sharp}(J^*(\theta_{X}))
  \end{equation}
\end{definition}
We recall that a momentum map is said to be $G$-equivariant if and only if it
is a Poisson map, i.e.
\begin{equation}
    J_*\pi
    =
    \pi_{G^*}.
\end{equation}
Finally, we can say that a \textbf{Poisson Hamiltonian action} in this context
is a Poisson action induced by an equivariant momentum map.
This definition generalizes Hamiltonian actions in the canonical setting
of Lie groups acting on Poisson manifolds. Indeed, we notice that, if
the Poisson structure on $G$ is trivial, the dual $G^*$ corresponds
to the dual of the Lie algebra $\mathfrak{g}^*$, the one-form $J^*(\theta_{X})$
is exact and the infinitesimal generator $\widehat{X}$ is a Hamiltonian
vector field.

The notion of Poisson Hamiltonian action can be further generalized
by using a weaker definition of momentum map, first introduced by Ginzburg
in \cite{Ginzburg1996} and further developed in \cite{Esposito2012}.
The basic idea is to consider the infinitesimal version of an equivariant
momentum map generating the fundamental vector fields of a Poisson action.
For this purpose, it is useful to recall that
given a Poisson structure $\pi$, the anchor map $\pi^{\sharp}$ defined as
$\pi^{\sharp}(\alpha) := \pi(\alpha, \cdot)$, defines a skew-symmetric
operation $[\cdot,\cdot]_{\pi} : \Omega^1(M) \times \Omega^1(M) \to \Omega^1(M)$.
This operation is given by the general formula
\begin{equation}
  \label{eq:LieForms}
    [\alpha,\beta]_{\pi}
    =
    \LL_{\pi^{\sharp}(\alpha)}\beta - \LL_{\pi^{\sharp}(\beta)}\alpha - d(\pi(\alpha,\beta))
\end{equation}
Furthermore, it provides $\Omega^1(M)$ with a Lie algebra structure such that
$\pi^{\sharp}: T^*M \to TM$ is a Lie algebra homomorphism (see Theorem 4.1 in\cite{Vaisman1994}).
\begin{definition}[Ginzburg, \cite{Ginzburg1996}]
  \label{def:Pre}
  Let $(M,\pi)$ be a Poisson manifold and $(G,\pi_G)$ a Poisson Lie group.
  \begin{itemize}
    \item[(i)] An \textbf{infinitesimal momentum map} is a map
    $\alpha:\mathfrak{g} \to \Omega^{1} (M): X \mapsto \alpha_X$ such that
    it generates the action by
	  \begin{equation}
	      \widehat{X}
	      =
	      \pi^{\sharp}(\alpha_X)
	  \end{equation}
    and satisfies the conditions to be a Lie algebra homomorphism
    \begin{equation}
      \label{eq:LieAlgMorph}
	      \alpha_{[X,Y]}
	      =
	      [\alpha_{X},\alpha_Y ]_{\pi}
    \end{equation}
    and a cochain map
    \begin{equation}
      \label{eq:Chain}
	      \dd\alpha_{X}
	      =
	      \alpha \wedge \alpha \circ \delta(X)
    \end{equation}
    \item[(ii)] A \textbf{pre-Hamiltonian action} is a Poisson action of $(G,\pi_G)$ on
    $(M,\pi)$ induced by an infinitesimal momentum map $\alpha: \mathfrak{g} \to \Omega^{1} (M)$.
  \end{itemize}
\end{definition}
Clearly this notion is weaker than the Hamiltonian notion, as it does not reduce
to the canonical one when the Poisson structure on $G$
is trivial. In fact, if $\pi_G = 0$ we have $\delta = 0$ and Eq.~(\ref{eq:Chain})
implies that $\alpha_X$ is a closed form, but in general this form is not exact unless
$M$ is simply connected.
If $M$ is not simply connected we can get examples in the symplectic realm like
rotations on a torus or more sophisticated ones for general Poisson structures.
\begin{example}
  \label{thm:toricbtorus}
  Consider the torus $\mathbb T^2$, with Poisson structure
  $\pi= {\sin\theta_1}\frac{\partial}{\partial \theta_1}\wedge \frac{\partial}{\partial \theta_2}$
  where the coordinates on the torus are $\theta_1,\theta_2\in\left[0,2\pi\right]$ .
  This Poisson structure is symplectic away from the set $Z=\{\theta_1 \in \{0, \pi\}\}$ and the Poisson
  structures satisfies a transversality condition at the vanishing set. This Poisson structure
  pertains to a class called $b$-Poisson structures (or $b$-symplectic structures)
  studied in \cite{Guillemin2014}. The circle action of rotation on the $\theta_2$
  coordinate defines a pre-Hamiltonian action on $\mathbb T^2$.
  Indeed it is possible to associate a $b$-symplectic form to this Poisson structure
  (see \cite{Guillemin2014}) and work with  $b$-symplectic actions. In this case
  $\frac{1}{\sin\theta_1}d\theta_1\wedge d\theta_2$. The circle action of rotation on
  the $\theta_2$ coordinate is pre-Hamiltonian and the associated one-form is
  $\frac{1}{\sin\theta_1} d\theta_1$ (see \cite{Guillemin2015} for properties
  of these actions on $b$-Poisson manifolds).
\end{example}
Furthermore, it is clear that any Poisson Hamiltonian action is pre-Hamiltonian because
the equivariant momentum map $J$ induces the infinitesimal one $\alpha$
by $\alpha_X = J^*(\theta_X)$. But not every infinitesimal momentum map arises from a
momentum map $J$. Here we recall an example of pre-Hamiltonian action which is not
Poisson Hamiltonian (see Remark 3.3 in\cite{Ginzburg1996}).

The study of the conditions in which an infinitesimal momentum map
determines a momentum map can be found in \cite{Esposito2012a}. The authors here
proved that if $M$ and $G$ are simply connected and $G$ is compact, then
${\mathscr D} = \{ \alpha_{\xi}-\theta_{\xi},\ \xi\in {\mathfrak g}\}$
generates an involutive distribution on $M \times G^*$
and a leaf $\boldsymbol{\mu}_{\mathscr{F}}$ of $\mathscr D$
is a graph of a momentum map if
\begin{equation}
    \pi (\alpha_{\xi} ,\alpha_{\eta}) - \pi_{G^*}(\theta_{\xi},\theta_{\eta})\vert_{\mathscr F}
    =
    0 ,
    \qquad
    \xi ,\eta \in {\mathfrak g}.
\end{equation}
The advantage of working with infinitesimal
momentum map is that the study of its existence and uniqueness is much
simpler than for the $G$-equivariant momentum map of Definition \ref{def:MomentumMap}.
In particular, it has been proven in \cite{Ginzburg1996} that any action of a
compact group with $H^2(\mathfrak{g}) = 0$ admits an infinitesimal momentum map.
Since assuming $\g$ semisimple implies automatically $H^2(\mathfrak{g}) = 0$, we can
conclude that whenever $\g$ is semisimple the action admits an infinitesimal momentum map.


\section{Rigidity of Poisson Hamiltonian actions}
\label{sec:RigidityHamiltonian}

As introduced in Section \ref{sec:intro}, by \textit{rigidity} of the actions
we mean that close actions are equivalent, i.e. they are conjugated
by a diffeomorphism. In what follows, we recall the rigidity theorem
in the context of Hamiltonian actions and we prove the rigidity of
Poisson Hamiltonian actions as its direct consequence.

\subsection{The case of zero Poisson structure on the Poisson Lie group $G$}
\label{sub:Zero}

Let us consider a Lie group $G$ acting on a Poisson manifold $M$ and
assume that the action is Hamiltonian, i.e. there exists a $G$-equivariant momentum map
$\mu : M \to \g^*$. The rigidity of Hamiltonian actions has been showed
in \cite{Miranda2012} by proving a rigidity theorem for the momentum map.
More precisely,
\begin{theorem}[Miranda, Monnier, Zung  \cite{Miranda2012}]
  \label{thm:mainglobal}
  Consider a compact Poisson manifold $(M,\pi)$ and a Hamiltonian action on $M$
  given by the momentum map $\lambda : M\longrightarrow \g^\ast$ where
  $\g$ is a semisimple Lie algebra of compact type.

  There exist a positive integer $l$ and two positive real numbers
  $\alpha$ and $\beta$ (with $\beta<1<\alpha$) such that, if $\mu$ is
  another momentum map on $M$ with respect to the same Poisson
  structure and Lie algebra, satisfying
  \begin{equation}
      \| \lambda - \mu \|_{2l-1} \leq \alpha
      \quad {\mbox { and }} \quad \|
      \lambda - \mu \|_{l} \leq \beta
  \end{equation}
  then, there exists a diffeomorphism $\psi$ of class
  $\mathscr{C}^{k}$, for all $k\geq l$, on $M$ such that
  $\mu\circ \psi =\lambda$.
\end{theorem}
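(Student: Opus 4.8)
The plan is to treat the conjugation equation $\mu\circ\psi=\lambda$ as a nonlinear normal-form problem on a scale of Banach spaces (an SCI space assembled from the norms $\|\cdot\|_k$) and to solve it by a Nash--Moser iteration whose solvability at each step rests on an \emph{infinitesimal} rigidity statement. The two thresholds $\|\lambda-\mu\|_{2l-1}\le\alpha$ and $\|\lambda-\mu\|_l\le\beta$ are exactly the data such a scheme consumes: a crude bound on a high norm and a fine bound on a low norm, with the quadratic convergence of Newton's method bridging the gap.

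First I would linearize. Seeking $\psi$ close to the identity as the time-one flow of a vector field $X$ whose flow preserves $\pi$ (so that the deformation stays within momentum maps), the equation $\mu\circ\psi=\lambda$ becomes, to first order, $d\mu(X)=\lambda-\mu$; thus the error $\lambda-\mu\colon M\to\g^*$ must be realized as the derivative of $\mu$ along some admissible $X$. The efficient way to organize this is cohomological: writing $\mu_X=\langle\mu,X\rangle$, the momentum and equivariance conditions make $X\mapsto\mu_X$ a cochain map, and the linearization of the conjugation is the Chevalley--Eilenberg coboundary on $\g$-cochains with values in $C^\infty(M)$, with $\g$ acting through the Hamiltonian action. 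Solving the linearized equation means exhibiting $\lambda-\mu$, read as a $1$-cochain, as a coboundary modulo the action of (Poisson) vector fields.

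Second---and this is where the hypotheses bite---I would establish infinitesimal rigidity. Since $\g$ is semisimple, Whitehead's lemmas give $H^1(\g,V)=H^2(\g,V)=0$ for every representation $V$; taking $V=C^\infty(M)$ kills the obstruction class and produces an explicit homotopy operator splitting the coboundary. Compactness enters through averaging over the group, which both selects the equivariant representative and furnishes the analytic estimates via an invariant metric. What I actually need is a \emph{tame} right inverse: bounds $\|X\|_k\le c_k\,\|\lambda-\mu\|_{k+r}$ with a fixed finite loss $r$ of derivatives and constants uniform along the scale.

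Third, I would feed this tame solvability into the abstract normal form theorem for SCI spaces of \cite{Miranda2012}. Its smoothing operators absorb the loss $r$ at each step, so the iterates $\psi_{n+1}=\psi_n\circ(\text{time-one flow of }X_n)$ converge in every $C^k$ with $k\ge l$ to a diffeomorphism $\psi$ with $\mu\circ\psi=\lambda$. The main obstacle is the second step: the vanishing of cohomology gives solvability only at a single fixed regularity, and upgrading it to uniform operator bounds on the whole Banach scale, with controlled and step-independent derivative loss, is the technical heart of the matter---precisely the ingredient the SCI normal form converts into genuine rigidity.
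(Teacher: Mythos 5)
Your proposal follows essentially the same route as the paper (which itself only sketches the argument and defers to \cite{Miranda2012}): linearize the conjugation equation via the Chevalley--Eilenberg complex of $\g$ with values in $C^\infty(M)$, use semisimplicity and compactness to obtain a homotopy operator with tame estimates, and run the Newton/Nash--Moser iteration with time-one Hamiltonian flows and smoothing operators, packaged as the SCI normal form theorem. Your identification of the tame bounds on the homotopy operator as the technical heart is exactly where the paper (following \cite{Miranda2012}) places the difficulty as well.
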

The main idea to prove this theorem is to approximate a given momentum map by an
iteration of momentum maps.
As explained in \cite{Guillemin2002}, a first approach to prove the equivalence of
Lie group actions on manifolds would following steps.
In general a Lie group action gives an element in $\mathscr{M} = \Hom(G,\Diff(M))$
and we can consider the additional action,
$$
    \begin{array}{rcl}
        \beta :& \Diff(M)  \times \mathscr{M}&\longmapsto \mathscr{M} \\
        & (\phi,\alpha) & \mapsto \phi\circ\alpha\circ \phi^{-1}
    \end{array}
$$
Two actions $\alpha_0$ and $\alpha_1$ are conjugate if they are on
the same orbit under  $\beta$  so, in particular, if $\beta$ has open orbits
the action is rigid. Observe that,
\begin{itemize}
  \item The tangent space to the orbit of $\beta$ coincides with  $1$-coboundaries
  of the group cohomology with coefficients in $V=\Vect(M)$ and
  the tangent space to $\mathscr{M}$ are the $1$-cocycles.
  \item The  generalized Whitehead lemma implies that for compact $G$ the
  cohomology group $H^{1}(G;\Vect(M))$ vanishes. This phenomenon is
  known as infinitesimal rigidity.
  In this case the tangent space to the orbit equals the tangent space to $\mathscr{M}$.
  \item If $\mathscr{M}$ is a manifold (or tame Fréchet)  we can apply the inverse function
  theorem  Nash-Moser to go from the tangent space to the manifold. We can use this fact
  to prove that $\beta$ has open manifolds and thus the action is rigid.
\end{itemize}
In general it is hard to verify the  \lq\lq  tame Fréchet'' condition but we can apply the
method used in the proof of Nash-Moser's theorem (Newton's iterative method).
This methods allows to proof several results of type
\emph{infinitesimal rigidity implies rigidity}. For Hamiltonian actions on Poisson
manifold the authors \cite{Miranda2012} consider the Chevalley-Eilenberg complex associated to the
representation given by the momentum map following the next steps:
\begin{enumerate}
  \item Assume  that the close momentum maps are $\mu: M \to \g^*$ and
  $\lambda:M\to \g^*$. The difference $\phi = \mu-\lambda$ defines a
  1-cochain of the complex which is a near 1-cocycle.
  \item We define $\Phi$ as the time-1-map of the Hamiltonian vector field
  $X_{S_{t}(h(\phi))}$ with  $h$ the homotopy operator and $S_t$ is a smoothing operator.
  \item The Newton iteration is given by,
  $$
      \phi_d
      =
      \phi^{1}_{X_{S_{t}(h(\eta_d))}}
  $$
  with $\eta_d = (\mu-\lambda)\circ \phi_{d-1} $. This converges to a Poisson
  diffeomorphism that conjugates both actions.
\end{enumerate}
Convergence is a \emph{hard} part of the proof. In order to circumvent these
difficulties, in \cite{Miranda2012} a strong use of geometric analysis tools is
performed to check this using the paraphernalia of SCI spaces. In particular the
theorem needed to prove convergence is the abstract normal form presented in
Section \ref{sub:NormalSCI}

\subsection{The case of non-trivial Poisson structure on the Poisson Lie group $G$}
\label{sec:rha}

In this section we prove the rigidity of Poisson Hamiltonian actions,
i.e. we show that two close Poisson Hamiltonian
actions with momentum maps $J_0, J_1: M\longrightarrow G^*$
are equivalent. This result has been obtained by combining well-known results of
Ginzburg and Weinstein concerning linearization of Poisson-Lie groups with
the ridigity theorem \ref{thm:mainglobal} for canonical momentum maps.

Observe that since a Poisson structure on a Poisson Lie group (with Poisson structure $\pi_G$)
must vanish at $e\in G$, its linearization at $e$ is well-defined
(recall that $d_e\pi_G:\mathfrak g\longrightarrow \mathfrak g\wedge \mathfrak g$).
The following theorem says that if $G$ is compact and semisimple, the Poisson
structure $\pi_G$ is linearizable, thus equivalent to $d_e\pi_G$ by diffeomorphisms.
\begin{theorem}[Ginzburg, Weinstein \cite{Ginzburg1992}]
  \label{thm:ginzburg}
    Let $G$ be a compact semisimple Poisson Lie group then the
    dual Poisson Lie group $G^*$ is globally diffeomorphic to
    $\g^*$ with the linear Poisson structure defined as
    $\{f,g\}_\eta=<\eta, [df_\eta, dg_{\eta}]>$.
\end{theorem}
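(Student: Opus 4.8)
The plan is to recast the statement as a global linearization theorem for the Poisson structure of $G\st$ at its identity and then to upgrade the local linearization to a global one, exploiting the rigidity forced by compactness and semisimplicity. First I would pin down the underlying manifolds. Since $G$ is compact and semisimple, the dual group $G\st$ is solvable and diffeomorphic to its Lie algebra $\g\st$, hence to a Euclidean space; concretely $G\st\cong AN$ via the Iwasawa decomposition $G_{\bbC}=GAN$ of the complexification. Fixing a diffeomorphism $G\st\to\g\st$ carrying $e$ to $0$ (for instance $\exp^{-1}$) and transporting $\pi_{G\st}$ gives a Poisson bivector $\pi_1$ on $\g\st$. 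Because $\pi_{G\st}$ vanishes at $e$ with intrinsic derivative $\delta=d_e\pi_G$, the bivector $\pi_1$ vanishes at the origin and shares its linear part with the Lie--Poisson bivector $\pi_0$ defined by $\{f,g\}_\eta=\langle\eta,[df_\eta,dg_\eta]\rangle$. The theorem is thus equivalent to producing a global diffeomorphism $\phi$ of $\g\st$ fixing $0$ with $\phi_*\pi_1=\pi_0$.

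Next I would run a Moser-type deformation, using that $\pi_0$ is homogeneous of degree one. Writing $m_a$ for the dilation $\eta\mapsto a\eta$, I would form the standard rescaling family $\pi_s:=s\,(m_{1/s})_*\pi_1$ for $s\in(0,1]$, extended by $\pi_0$ at $s=0$; each $\pi_s$ is Poisson, $\pi_1$ is recovered at $s=1$, and the agreement of linear parts guarantees $\pi_s\to\pi_0$ smoothly as $s\to 0$. I would then seek a time-dependent vector field $X_s$ trivializing this family, that is, solving the homological equation $\tfrac{\dd}{\dd s}\pi_s+\LL_{X_s}\pi_s=0$ with $X_s(0)=0$; integrating $X_s$ from $s=1$ to $s=0$ yields the desired $\phi$, and conjugating back by the chosen diffeomorphism produces the Poisson isomorphism $G\st\cong(\g\st,\pi_0)$.

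The solvability of the homological equation, and its behaviour as $s\to 0$, is where compactness and semisimplicity enter and constitutes the main obstacle. The obstruction to solving $\LL_{X_s}\pi_s=-\tfrac{\dd}{\dd s}\pi_s$ lives in the second Lichnerowicz--Poisson cohomology of $\pi_s$; for the model $\pi_0$ this is the Chevalley--Eilenberg cohomology of $\g$ with coefficients in $\mathrm{Sym}\,\g$, which vanishes for $\g$ semisimple of compact type by Whitehead's lemma and the complete reducibility of $\g$-modules. To keep the construction coherent I would average $X_s$ over $G$, acting by the coadjoint action on $\g\st$ and by the dressing action on $G\st$, so that $\phi$ becomes $G$-equivariant and matches symplectic leaves, namely coadjoint orbits with dressing orbits, both parametrized by the positive Weyl chamber. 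The genuinely hard analytic point, exactly as in Conn's linearization theorem, is that the naive deformation flow need not converge near the singular leaves and as $s\to 0$; I expect this to be the crux, and I would resolve it by the smoothing-plus-Newton iteration encapsulated in the abstract normal form theorem for SCI spaces used elsewhere in this paper, trading the direct estimate for a rapidly converging approximation scheme that delivers a smooth global Poisson diffeomorphism.
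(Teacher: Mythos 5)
The paper does not prove this statement: it is the Ginzburg--Weinstein linearization theorem, quoted verbatim from \cite{Ginzburg1992} and used as a black box in the proof of the rigidity theorem for Poisson Hamiltonian actions that follows it (the footnote about the differentiability class is the only gesture toward its proof). So there is no internal argument to compare yours against; I can only measure your sketch against the original.

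Your outline does track the actual Ginzburg--Weinstein strategy (identify $G^*$ with $\g^*$ via the Iwasawa decomposition, connect the transported structure to the Lie--Poisson structure by a dilation path, solve the homological equation $\dot\pi_s+\LL_{X_s}\pi_s=0$ and integrate), but the step you defer is precisely the one that has to be carried out, and your proposed way of handling it is off target. First, the obstruction does not live in $H^2(\g,\mathrm{Sym}\,\g)$ but in the Lichnerowicz--Poisson cohomology of each $\pi_s$ (not only of $\pi_0$) with coefficients in $\Cinfty(\g^*)$, an infinite-dimensional module to which Whitehead's lemma and complete reducibility do not directly apply; the vanishing is obtained by averaging over the compact group, which is also what produces a primitive $X_s$ depending continuously on $s$ with uniform $\mathscr{C}^k$ bounds. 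Second, because that averaging operator is bounded in every $\mathscr{C}^k$ norm, the Moser flow converges directly and no Nash--Moser iteration is needed: the ``crux'' you identify (loss of derivatives near the singular leaves and as $s\to 0$) is exactly the difficulty that the global compact setting removes, and this is what distinguishes the Ginzburg--Weinstein theorem from Conn's local linearization. Finally, invoking the abstract SCI normal form theorem of this paper is not a proof but a deferral: that theorem is set up for conjugating momentum maps, and none of its hypotheses (\ref{eqn:proj})--(\ref{eqn:estimate-zeta}) are verified for the linearization problem you pose. As written, the proposal assembles the right ingredients but leaves the theorem unproved at its decisive step.
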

Combining Theorem \ref{thm:ginzburg} and Theorem \ref{thm:mainglobal}
to obtain rigidity for the Poisson Hamiltonian action, as stated below
\begin{theorem}
  \label{thm:mainhamiltonian}
  Let $G$ be a compact semisimple Poisson Lie group $G$ acting on a
  compact Poisson manifold $(M,\pi)$ in a Hamiltonian fashion given
  by the momentum map $J_0 : M\longrightarrow G^\ast$.

  There exist a positive integer $l$ and two positive real numbers
  $\alpha$ and $\beta$ (with $\beta<1<\alpha$) such that, if $J_1$ is
  another momentum map on $M$ with respect to the same Poisson
  structure and Poisson Lie group, satisfying
  \begin{equation}
      \| J_0 - J_1 \|_{2l-1}
      \leq
      \alpha
      \quad {\mbox { and }} \quad
      \| J_0 - J_1 \|_{l}
      \leq
      \beta
  \end{equation}
  then, there exists a Poisson diffeomorphism $\psi$ of class
  $C^k$, for all $k\geq l$, on $M$ such that
  $J_1\circ \psi =J_0$.
\end{theorem}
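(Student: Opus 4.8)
The plan is to reduce the statement to the canonical rigidity theorem of Miranda, Monnier and Zung (Theorem~\ref{thm:mainglobal}) by linearizing the target $G^*$ through the Ginzburg--Weinstein diffeomorphism (Theorem~\ref{thm:ginzburg}). First I would invoke Theorem~\ref{thm:ginzburg} to fix, once and for all, a global Poisson diffeomorphism $\Psi : (G^*, \pi_{G^*}) \to (\g^*, \pi_{\mathrm{lin}})$, where $\pi_{\mathrm{lin}}$ is the linear Lie--Poisson structure on $\g^*$ and $\g$ is semisimple of compact type, so that $\g^*$ is exactly the target appearing in Theorem~\ref{thm:mainglobal}. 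Crucially, $\Psi$ depends only on $G$ and not on the momentum maps, and it is smooth with derivatives bounded on the compact set $J_0(M)\cup J_1(M)$.

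Next I would push both momentum maps through $\Psi$, setting $\lambda := \Psi \circ J_0$ and $\mu := \Psi \circ J_1$. Since an equivariant momentum map is by definition a Poisson map $M \to G^*$ and $\Psi$ is a Poisson diffeomorphism, both $\lambda$ and $\mu$ are Poisson maps $M \to (\g^*, \pi_{\mathrm{lin}})$, hence canonical momentum maps for Hamiltonian actions of $\g$ on $(M,\pi)$ in the sense required by Theorem~\ref{thm:mainglobal}. In this way the Poisson Lie problem is transported verbatim into the classical Lie-algebra setting, with $\lambda$ playing the role of the reference map and $\mu$ that of the nearby one.

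The analytic heart of the argument is to transfer the closeness hypotheses across $\Psi$. Because $\Psi$ is a fixed smooth map with bounded derivatives on the relevant compact region, the chain rule together with the standard interpolation (tame) inequalities for the $\| \cdot \|_k$ norms yields estimates of the shape $\| \lambda - \mu \|_k \leq C_k\,(1 + \| J_0 \|_k + \| J_1 \|_k)\,\| J_0 - J_1 \|_k$ for each $k$. Consequently, after shrinking $\beta$, enlarging the admissible range of $\alpha$, and increasing the integer $l$ relative to the constants furnished by Theorem~\ref{thm:mainglobal} and by $\Psi$, the hypotheses $\| J_0 - J_1 \|_{2l-1}\le \alpha$ and $\| J_0-J_1 \|_l \le \beta$ will force $\| \lambda-\mu \|_{2l-1}$ and $\| \lambda-\mu \|_l$ below the thresholds demanded by the canonical theorem. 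This bookkeeping is the step I expect to be the main, though essentially technical, obstacle: one must check that the norms used on $G^*$-valued maps are precisely those obtained by pulling back along $\Psi$, and that precomposition by the nonlinear $\Psi$ does not disrupt the smoothing and interpolation scheme underlying the Newton iteration of Theorem~\ref{thm:mainglobal}.

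Finally I would apply Theorem~\ref{thm:mainglobal} to $\lambda$ and $\mu$, obtaining a Poisson diffeomorphism $\psi$ of $M$ of class $C^k$ for every $k\ge l$ with $\mu\circ\psi=\lambda$. Unravelling the definitions, this reads $\Psi\circ J_1\circ\psi = \Psi\circ J_0$, and since $\Psi$ is a diffeomorphism I may cancel it to conclude $J_1\circ\psi=J_0$. As the diffeomorphism produced by the iteration is built from time-one flows of Hamiltonian vector fields, it is automatically a Poisson diffeomorphism, which delivers exactly the conclusion of the theorem and completes the proof.
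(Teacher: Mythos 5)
Your proposal is correct and follows essentially the same route as the paper: compose both momentum maps with the Ginzburg--Weinstein linearizing diffeomorphism $G^*\to\mathfrak g^*$ to obtain canonical momentum maps, apply the Miranda--Monnier--Zung rigidity theorem, and cancel the diffeomorphism to conclude. Your discussion of transferring the $\|\cdot\|_k$ closeness hypotheses across the fixed diffeomorphism is a detail the paper leaves implicit, but it does not change the argument.
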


\begin{proof}
Denote by $\Phi$ the linearizing Poisson diffeomorphism\footnote{The
differentiability class can be assumed to be $k$ by the construction
in the proof of Theorem \ref{thm:ginzburg}} $\Phi: G^*\longrightarrow \mathfrak g^*$
given by theorem \ref{thm:ginzburg} and consider the compositions
$\mu_0=\Phi\circ J_0$ and $\mu_1=\Phi\circ J_1$.
The mappings  $\mu_0: M\longrightarrow \mathfrak{g}^*$ and
$\mu_1: M\longrightarrow \mathfrak{g}^*$ are canonical momentum maps
and we may consider the infinitesimal Hamiltonian actions of
$\mathfrak g$ ($\beta_0$ and $\beta_1$). These actions integrate to
infinitesimal standard Hamiltonian actions of the Lie group $G$ which
preserve the Poisson structure on $M$. We may now apply Theorem
\ref{thm:mainglobal} to obtain a diffeomorphism $\psi$ such that
$\mu_1\circ \psi =\mu_0$ and therefore $J_1\circ \psi =J_0$.
\end{proof}
This result is just telling us that rigidity of the standard momentum map
implies rigidity of Lu's momentum map.
In general this will work whenever we have a linearization theorem
for the Poisson Lie group $G$.
\begin{metatheorem}
  \label{thm:mainhamiltonian}
  Whenever the Poisson Lie group structure in $G$ is
  linearizable, the rigidity of the momentum map
  $\mu : M \to \g^*$ implies the rigidity of the momentum
  map $J: M \to G^*$ for linearizable Poisson Lie groups $G$.
\end{metatheorem}
The linearization of Poisson Lie groups has been studied by
Enriquez, Etingof and Marshall \cite{Enriquez2005} in the context of
quasitriangular Poisson Lie groups and further generalized to
coboundary Poisson Lie groups by Alekseev and Meinrenken \cite{Alekseev2013}.
In particular, for coboundary Poisson Lie groups the authors
define a modified exponential $\mathsf{E}: \mathfrak g^* \to G^*$ and prove
that it is a Poisson diffeomorphism. If a rigidity result would work for coboundary
Poisson Lie groups then the metatheorem above would imply rigidity
for this class too because $J_i = \mathsf{E}\circ\mu_i$.
To the authors' knowledge, such a rigidity result for
Hamiltonian actions is not known to hold in general for non-semisimple Lie groups.
\begin{remark}
  It would be possible to relax the SCI-hypotheses in order to prove
  rigidity for Poisson Lie group actions on compact manifolds.
  The SCI-apparatus is indeed thought for the semilocal case
  (neighbourhood of a compact invariant submanifold).
  However, thanks to the SCI-scheme the rigidity statement for
  compact manifolds is automatically valid in the semilocal setting
  (due to the need to control the convergence of the radii of
  shrank neighbourhoods in the iterative process).
\end{remark}

\subsubsection{An application to groupoids}

Theorem \ref{thm:mainhamiltonian} has a direct application to the study of
momentum maps lifted to symplectic groupoids.
Let us consider an integrable Poisson manifold $M$ and
its symplectic groupoid $\Sigma(M) \rightrightarrows M$.
We recall that, as proved in \cite{Xu2003}, if one has a Poisson Hamiltonian
action of $(G, \pi_G)$ on $(M,\pi)$ with momentum map $J: M\to G^*$, then
$J^{\Sigma(M)}: \Sigma(M) \to  G^*$ is exact:
\begin{equation}
    J^{\Sigma(M)}(x)
    =
    J(t(x))J(s(x))^{-1},
\end{equation}
where $s,t$ are the source and target maps.
Thus, using the fact that Poisson morphisms can be integrated (see \cite{Crainic2004a})
we get the following,
\begin{corollary}
    Given two close  momentum maps
    $J_i:M\longrightarrow G^*$, $i= 1, 2$ on an integrable Poisson
    manifold $M$, then there exists a symplectic groupoid morphism
    $\phi$ on $\Sigma(M)$ such that the corresponding lifted moment maps
    $J_i^{\Sigma(M)}$ satisfy $J_1^{\Sigma(M)} = J_2^{\Sigma(M)}\circ \phi$.
\end{corollary}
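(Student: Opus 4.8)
The plan is to push the base-level rigidity of Theorem \ref{thm:mainhamiltonian} up to the symplectic groupoid. First I would regard $J_1$ as the reference momentum map and $J_2$ as the nearby one, so that the hypothesis that the two maps are \emph{close} is precisely the pair of estimates $\|J_1-J_2\|_{2l-1}\le\alpha$ and $\|J_1-J_2\|_{l}\le\beta$ required in Theorem \ref{thm:mainhamiltonian}. Applying that theorem produces a Poisson diffeomorphism $\psi$ of $M$, of class $C^k$ for every $k\ge l$, such that $J_2\circ\psi=J_1$.

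The key step is then to lift $\psi$ to $\Sigma(M)$. Since $M$ is integrable and $\Sigma(M)$ is its source-simply-connected symplectic groupoid, the Poisson diffeomorphism $\psi$ induces an automorphism of the cotangent Lie algebroid $T^*M$, and by the integration of Poisson maps \cite{Crainic2004a} this integrates to a symplectic groupoid automorphism $\phi\colon\Sigma(M)\to\Sigma(M)$ covering $\psi$; in particular $s\circ\phi=\psi\circ s$ and $t\circ\phi=\psi\circ t$, where $s,t$ denote the source and target maps. This is where the main difficulty lies: one must ensure both that the lift exists globally---which uses the integrability of $M$ together with a completeness argument furnished by the compactness of $M$---and that $\phi$ is genuinely symplectic rather than merely a Lie groupoid morphism, which is exactly what the cotangent-algebroid formulation of the integration of Poisson maps guarantees. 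The differentiability class of $\phi$ is inherited from that of $\psi$.

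It then remains to check the desired relation, which is a formal computation from the explicit expression $J^{\Sigma(M)}(x)=J(t(x))J(s(x))\inv$ recorded above. Using the covering identities and $J_2\circ\psi=J_1$, for every arrow $x\in\Sigma(M)$ one finds
\begin{align*}
    J_2^{\Sigma(M)}(\phi(x))
    &= J_2\bigl(\psi(t(x))\bigr)\,J_2\bigl(\psi(s(x))\bigr)\inv \\
    &= J_1(t(x))\,J_1(s(x))\inv
    = J_1^{\Sigma(M)}(x),
\end{align*}
which is exactly the asserted identity $J_1^{\Sigma(M)}=J_2^{\Sigma(M)}\circ\phi$.
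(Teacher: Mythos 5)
Your proposal is correct and follows essentially the same route as the paper: apply Theorem \ref{thm:mainhamiltonian} to the close momentum maps to obtain the Poisson diffeomorphism $\psi$ with $J_2\circ\psi=J_1$, integrate $\psi$ to a symplectic groupoid morphism $\phi$ of $\Sigma(M)$ via the integrability of Poisson morphisms \cite{Crainic2004a}, and conclude from the exactness formula $J^{\Sigma(M)}(x)=J(t(x))J(s(x))^{-1}$ of \cite{Xu2003}. You merely spell out the final computation, which the paper leaves implicit.
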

In other words, rigidity of the momentum maps implies rigidity of the corresponding
lifted momentum maps.
The general case of momentum maps on symplectic groupoids is still open and this corollary gives a
motivating example to investigate on the rigidity of $J^{\Sigma(M)}$ when $J$ does not exist.

\section{Rigidity of pre-Hamiltonian actions}
\label{sec:Main}

In this section we prove the main result of this paper, the rigidity
of pre-Hamiltonian actions. More precisely, we consider two Poisson
actions generated by the infinitesimal momentum maps $\alpha$ and
$\tilde{\alpha}$ and we prove a rigidity property:
\textit{close implies equivalent}. The proof follows the same lines
discussed in Section \ref{sub:Zero}.

First, we have to set up the concepts of \textit{close} and
\textit{equivalent} for infinitesimal momentum maps.
We can define the topology by using the associated infinitesimal
momentum maps, i.e. we can also use the $\mathscr{C}^{k}$-norm
of the infinitesimal momentum map $\alpha: \mathfrak g\to \Omega^{1}(M)$
and work with $\alpha_X$, for $X\in \mathfrak g$  as
mappings $\alpha_X: M\to  T^*M$.
On the other hand, two infinitesimal momentum maps are said to be
\textit{equivalent} if there exists a morphism of Lie algebras
conjugating them.

As in the Hamiltonian setting, we aim to prove that infinitesimal rigidity
implies rigidity, thus the first step is to consider the Chevalley-Eilenberg
complex associated to the infinitesimal momentum map.
The first cohomology group of this complex can be interpreted as
infinitesimal deformations and when it vanishes we obtain infinitesimal rigidity.
Then, using the techniques of SCI-spaces we can prove the equivalence of
infinitesimal momentum maps via Lie algebra morphisms. More explicitely,
let $\alpha$ and $\tilde{\alpha}$ be two close infinitesimal
momentum maps. The idea is to construct a sequence $\alpha_n$ which
are equivalent, with $\alpha_0 = \alpha$ and such that $\alpha_n$
tends to $\tilde{\alpha}$ when $n$ tends to $+\infty$.
\begin{enumerate}
    \item We consider the difference $\beta = \alpha - \tilde{\alpha}$,
    which defines a 1-cochain of the associated complex which is a near 1-cocycle.
    \item We define $\Phi$ as the time-1-map of the vector field
    $X_{\hc(\beta)} = \pi^\sharp(\hc(\beta))$ with $\hc$ the homotopy operator.
    \item The Newton iteration is given by,
    $$
        \Phi_n = \phi^{1}_{X_{(\hc(\beta_n))}}
    $$
    with $\beta_n = \Phi_{n-1} \circ(\alpha - \tilde{\alpha})$. This converges to a
    Lie algebra morphism that conjugates both momentum maps.
  \end{enumerate}
Instead of checking convergence of this sequence of we are going to
use a normal form theorem for SCI-spaces.


\subsection{A Chevalley-Eilenberg complex and infinitesimal rigidity}
\label{sec:ce}

First we define the Chevalley-Eilenberg complex associated to an
infinitesimal momentum map $\alpha: \g \to \Omega^1(M)$
and discuss the properties that will be used to prove
the rigidity theorem. The infinitesimal momentum map
defines a representation of the Lie algebra $\g$ on the space
of 1-forms on $M$ as we prove in the following
\begin{lemma}
  Let $\alpha: \g \to \Omega^1(M): X \mapsto \alpha_{X}$ be the infinitesimal
  momentum map. It defines a representation $\rho$ of $\g$ on $\Omega^1(M)$
  by
  \begin{equation}
    \label{eq: rep}
	    \rho_{X}(\beta)
	    :=
	    [\alpha_{X}, \beta]_{\pi} 
  \end{equation}
  for any $X\in \g $.
\end{lemma}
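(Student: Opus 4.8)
The plan is to verify the two defining properties of a Lie algebra representation: $\bbR$-linearity in $X$ (which is immediate from the $\bbR$-bilinearity of the bracket $[\cdot,\cdot]_\pi$ and the linearity of $\alpha$) and the compatibility with brackets, namely that $\rho_{[X,Y]} = \rho_X \rho_Y - \rho_Y \rho_X$ as operators on $\Omega^1(M)$. The entire content of the lemma is contained in this second identity, so I would focus the argument there.

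First I would unwind both sides on an arbitrary test form $\beta \in \Omega^1(M)$. The left-hand side is $\rho_{[X,Y]}(\beta) = [\alpha_{[X,Y]}, \beta]_\pi$, and here I would invoke the homomorphism property \eqref{eq:LieAlgMorph} of the infinitesimal momentum map to rewrite $\alpha_{[X,Y]} = [\alpha_X, \alpha_Y]_\pi$, giving
\begin{equation}
    \rho_{[X,Y]}(\beta) = \bigl[\,[\alpha_X, \alpha_Y]_\pi,\ \beta\,\bigr]_\pi.
\end{equation}
The right-hand side expands to
\begin{equation}
    \rho_X(\rho_Y(\beta)) - \rho_Y(\rho_X(\beta)) = \bigl[\alpha_X, [\alpha_Y, \beta]_\pi\bigr]_\pi - \bigl[\alpha_Y, [\alpha_X, \beta]_\pi\bigr]_\pi.
\end{equation}
The desired equality is then exactly the Jacobi identity for the bracket $[\cdot,\cdot]_\pi$ on $\Omega^1(M)$, applied to the triple $(\alpha_X, \alpha_Y, \beta)$. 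Since $(\Omega^1(M), [\cdot,\cdot]_\pi)$ is a genuine Lie algebra — this is the structure recalled just before Definition \ref{def:Pre}, citing Theorem 4.1 of \cite{Vaisman1994} — the Jacobi identity holds, and the two expressions coincide.

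The main (and really only) subtlety is bookkeeping: one must make sure the homomorphism property \eqref{eq:LieAlgMorph} is used in the correct slot and that the sign conventions in the Jacobi identity match the antisymmetry of $[\cdot,\cdot]_\pi$ as written in \eqref{eq:LieForms}. There is no analytic or geometric obstacle here; the result is a formal consequence of two facts that are already available, so I would expect the proof to be short. If one wished to avoid citing \eqref{eq:LieAlgMorph} directly, an alternative is to transport the problem to vector fields via the anchor $\pi^\sharp$, which is a Lie algebra homomorphism, and observe that $\rho$ covers the adjoint-type action of the fundamental vector fields $\widehat{X} = \pi^\sharp(\alpha_X)$; but the direct route through the Jacobi identity is cleaner and is the one I would write out.
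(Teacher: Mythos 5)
Your proposal is correct and follows essentially the same route as the paper: the paper's proof likewise expands $\rho_X\rho_Y(\beta)-\rho_Y\rho_X(\beta)$ into nested brackets, applies the Jacobi identity for $[\cdot,\cdot]_\pi$ to obtain $[[\alpha_X,\alpha_Y]_\pi,\beta]_\pi$, and then invokes the homomorphism property \eqref{eq:LieAlgMorph} to conclude. The only difference is cosmetic — you argue from $\rho_{[X,Y]}$ toward the commutator while the paper goes the other way — so nothing further is needed.
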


\begin{proof}
  This is a direct consequence of properties of the Lie bracket
  $[\cdot,\cdot]_{\pi}$ and of $\alpha$ since we have:
  \begin{align*}
    \rho_X \rho_Y(\beta)- \rho_Y \rho_X (\beta)
    &=
    [\alpha_{X}, [\alpha_{Y}, \beta]_{\pi}]_{\pi}-[\alpha_{Y}, [\alpha_{X}, \beta]_{\pi}]_{\pi} \\
    &=
    [[\alpha_{X}, \alpha_{Y}]_{\pi}, \beta]_{\pi} \\
    &=
    [\alpha_{[X,Y]}, \beta]_{\pi} \\
    &=
    \rho_{[X,Y]}(\beta).
  \end{align*}
\end{proof}
Thus, for $q\in\mathbb N$, $C^q(\g, \Omega^1(M)) = {\Hom}(\bigwedge^q{\frak{g}},\Omega^1(M))$
is the space of alternating $q$-linear maps from $\g$ to
$\Omega^1(M)$, with the convention $C^0(\g,\Omega^1(M)) = \Omega^1(M)$.
The associated differential is denoted by $\partial_i$.
Explicitly, we have
\[
    \xymatrix{
    \Omega^1(M)
    \ar[r]^-{\partial_0} &%
    C^1(\g, \Omega^1(M))
    \ar[r]^-{\partial_1} &%
    C^2(\g, \Omega^1(M))
    }
\]
where
\begin{equation}
\label{eq:DifferentialCE}
\begin{split}
    \partial_0(\beta)(X)
    &=
    \rho_{X}(\beta)
    = [\alpha_{X}, \beta]_{\pi},  \\
    \partial_1(\gamma)(X \wedge Y)
    &=
    \rho_{X}(\gamma(Y))-\rho_{Y}(\gamma(X))-\gamma([X,Y]),
\end{split}
\end{equation}
for any $\beta\in\Omega^1(M)$, $\gamma\in C^1(\g, \Omega^1(M))$ and
$X,Y \in\frak g$.
These differentials satisfy $\partial_i\circ \partial_{i-1}=0$ and
we can define the quotients
$$
    H^{i}(\frak g, \Omega^1(M))
    =
    \ker(\partial_i)/\image(\partial_{i-1}) \quad
  \forall i \in \mathbb{N}.$$
The first cohomology group can be interpreted as infinitesimal deformations of
the infinitesimal momentum maps modulo trivial deformations.
In the compact semisimple case, it is known that the first and second cohomology
groups vanish, so we have the infinitesimal rigidity.
To prove that infinitesimal rigidity implies rigidity we need to prove
that our spaces comply with the SCI-spaces requirement (for more details about SCI-spaces
see \cite{Miranda2012}). In particular, certain inequalities have to be
checked for the homotopy operators, necessary to control the
loss of differentiability in the iterative process.
In the Hamiltonian case, the trick used
in \cite{Miranda2012} and \cite{Conn1985} in order to prove the desired inequalities
is to first use Sobolev metrics and then Sobolev inequalities and then take the
real part. For the Chevalley-Eilenberg complex defined above
we need those inequalities applied to mappings
$\alpha: \mathfrak g\to \Omega^{1}(M)$
and work with  $\alpha_X$, for $X\in \mathfrak g$
as mappings $\alpha_X : M\to  T^*M$.
Since $M$ is compact, Sobolev inequalities holds too. A different way to do
this is to consider Sobolev norms in the space of one-forms\footnote{For
one-forms on oriented manifolds, we may consider the higher degree versions of
the following norm:  $<\alpha, \beta> = \int_X \alpha\wedge *\beta $
where $*\beta$ stands for the Hodge dual of $\beta$.} and
$\mathscr{C}^{k}$-topology for the space of one-forms (see for instance \cite{Dodziuk1981} or
\cite{Goldstein2006}) and adapt the same steps.
\begin{lemma}
  \label{lem:Homotopy}
  In the Chevalley-Eilenberg complex associated to $\rho$:
  \[
      \xymatrix{
      \Omega^1(M)
      \ar[r]^-{\partial_0} &%
      C^1(\g, \Omega^1(M))
      \ar[r]^-{\partial_1} &%
      C^2(\g, \Omega^1(M))}
  \]
  there exists a chain of homotopy operators
  \[
  \xymatrix{
  \Omega^1(M)\ar[r]^-{\partial_0} &%
  C^1(\frak g, \Omega^1(M))\ar[r]^-{\partial_1}\ar@<1ex>[l]^-{\hc_0}  &%
  C^2(\frak g, \Omega^1(M))\ar@<1ex>[l]^-{\hc_1} }.
  \]
  such that
  \begin{align*}
      \partial_0\circ \hc_0 + \hc_{1}\circ\partial_{1}
      &=
      \id_{C^{1}(\frak g, \Omega^1(M)))}\\
      \partial_1\circ \hc_1 + \hc_{2}\circ\partial_{2}
      &=
      \id_{C^{1}(\frak g, \Omega^1(M)))}.
  \end{align*}
  Moreover,  for each $k$, there exists a real constant $C_k>0$ such that
  \begin{equation}
  \label{5.7}
    \norm{\hc_j(S)}\norm_{k, r}\leq C_k\norm{S}\norm_{k+s, r}, \quad j=0,1,2
  \end{equation}
  for all $S\in C^{j+1}(\frak g, \Omega^1(M))$
\end{lemma}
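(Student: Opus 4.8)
The plan is to produce the operators $\hc_j$ from the Casimir of the representation $\rho$ and then to read the tame estimate \eqref{5.7} off their explicit shape. Since $\g$ is of compact type I would first fix the $\operatorname{Ad}$-invariant inner product $-B$ on $\g$ given by (minus) the Killing form, together with an orthonormal basis $\{X_a\}$, and average a Riemannian metric on $M$ over the compact simply connected group $G$ to obtain a $G$-invariant one; this is the ``Sobolev metric'' step and it is what makes the norms $\norm{\cdot}\norm_{k,r}$ on $\Omega^1(M)$ --- defined through the Hodge pairing $\langle\alpha,\beta\rangle=\int_M\alpha\wedge *\beta$ of the footnote --- behave equivariantly. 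Combining this pairing on the coefficients with the induced pairing on $\bigwedge^\bullet\g^\ast$, I would define the formal adjoints $\partial_j^\ast$ of the Chevalley--Eilenberg differentials and the cochain Laplacians $\Delta_j=\partial_{j-1}\partial_{j-1}^\ast+\partial_j^\ast\partial_j$ on $C^j(\g,\Omega^1(M))$.

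The algebraic input is the classical identity that $\Delta_j$ equals the action of the Casimir element $\operatorname{Cas}=-\sum_a\rho_{X_a}^2$ on $\bigwedge^j\g^\ast\otimes\Omega^1(M)$ (through the coadjoint action on the first factor and $\rho$ on the second). Because $G$ is compact and semisimple, Whitehead's lemmas give $H^1(\g,\Omega^1(M))=H^2(\g,\Omega^1(M))=0$, so $\Delta_1$ and $\Delta_2$ are invertible on their domains; I would then set $\hc_j=\partial_j^\ast\circ\Delta_{j+1}^{-1}$, composing with the orthogonal projection off the harmonic cochains in the one degree where they may survive. The two homotopy identities then follow by the standard Hodge computation, using that $\partial$ and $\partial^\ast$ commute with $\Delta$; this part is purely formal once invertibility is in hand.

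The hard part will be the tame estimate \eqref{5.7}. The difficulty is that $\operatorname{Cas}=-\sum_a\rho_{X_a}^2$ differentiates only along the orbit directions $\widehat{X_a}=\pi^\sharp(\alpha_{X_a})$ and is therefore not elliptic on $M$, so $\Delta_{j+1}^{-1}$ cannot be handled by elliptic regularity. What can be used instead is the a priori lower bound $\operatorname{Re}\langle\Delta S,S\rangle\geq c_0\norm{S}\norm^2$ on the orthogonal complement of the harmonic cochains, where $c_0>0$ is the smallest nonzero Casimir eigenvalue: here the compactness of $G$ (which lets one average the inner product so that each $\rho_{X_a}$ is skew-adjoint) and the ``take the real part'' step of \cite{Conn1985} and \cite{Miranda2012} enter, the positivity of the invariant pairing forcing a uniform spectral gap. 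This bounds $\Delta_{j+1}^{-1}$ in $L^2$; commuting covariant derivatives through $\Delta$ produces only lower-order errors, which a standard induction on $k$ absorbs to give a bound with a fixed loss of derivatives, and since $\hc_j$ is a composition of $\Delta_{j+1}^{-1}$ with the first-order operators $\partial_j^\ast$ built from $\rho$, the total loss is a fixed $s$. Converting these Sobolev bounds into the $\mathscr{C}^{k}$-norms $\norm{\cdot}\norm_{k,r}$ is the Sobolev embedding, valid since $M$ is compact. The only genuinely new bookkeeping relative to \cite{Miranda2012} is that the coefficients are one-forms rather than functions, so $\rho_X$ is the three-term expression $\LL_{\pi^\sharp(\alpha_X)}\beta-\LL_{\pi^\sharp(\beta)}\alpha_X-\dd(\pi(\alpha_X,\beta))$; each term is first order in $\beta$, so the estimates transfer verbatim once the Hodge pairing on forms replaces the scalar one.
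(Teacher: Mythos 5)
Your route is the one the paper itself takes: the proof in the paper simply defers to the Conn--Miranda--Monnier--Zung scheme (an invariant inner product, identification of the Chevalley--Eilenberg Laplacian with the Casimir, vanishing of $H^{1}$ and $H^{2}$, a spectral gap giving the $L^{2}$ bound for $\Delta^{-1}$, commutator estimates absorbing the higher Sobolev norms with a fixed loss $s$ of derivatives, and Sobolev embedding), and your construction $\hc_{j}=\partial_{j}^{*}\circ\Delta_{j+1}^{-1}$ with the homotopy identities coming from $\partial\Delta=\Delta\partial$ is exactly the intended expansion of that scheme.

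There is, however, one step that fails as literally written, and it sits precisely where ``replace functions by one-forms'' is not verbatim. Applying Cartan's formula to the middle term of Eq.~(\ref{eq:LieForms}) gives $\rho_{X}(\beta)=[\alpha_{X},\beta]_{\pi}=\LL_{\widehat{X}}\beta-\iota_{\pi^{\sharp}(\beta)}\dd\alpha_{X}$: the three first-order terms combine into the Lie derivative along the fundamental vector field plus a \emph{zeroth-order bundle endomorphism} of $T^{*}M$, which is nonzero exactly when $\alpha_{X}$ is not closed, i.e.\ in the genuinely Poisson Lie ($\pi_{G}\neq 0$) situation the lemma is meant to cover. Hence $\rho$ is \emph{not} the infinitesimal pullback action of $G$ on $\Omega^{1}(M)$, and averaging a Riemannian metric on $M$ over the $G$-action on $M$, as in your first paragraph, makes $\LL_{\widehat{X}_{a}}$ skew-adjoint but not $\rho_{X_{a}}$; a skew principal part plus a non-skew zeroth-order term is not skew, so the Casimir $-\sum_{a}\rho_{X_{a}}^{2}$ need not be self-adjoint and non-negative, there is no isotypic decomposition, no spectral gap, and the bound $\operatorname{Re}\langle\Delta S,S\rangle\geq c_{0}\|S\|^{2}$ on which everything rests has no justification. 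The repair is to average over the \emph{integrated} action: by Theorem~\ref{thm:PoissonFlow} the derivations $\LL_{\alpha_{X}}$ integrate to fibrewise-linear automorphisms $\Phi^{*}_{t}$ of $T^{*}M$ covering the action on $M$ (they are not cotangent lifts), and since $M$ is compact and $G$ simply connected these assemble into a $G$-action on $\Omega^{1}(M)$ over which the fibre metric must be averaged. Your later phrase ``average the inner product so that each $\rho_{X_{a}}$ is skew-adjoint'' is consistent with this fix, but it needs to be made explicit, both because it contradicts your first paragraph and because the same integration-plus-Peter--Weyl step is what legitimises invoking Whitehead's lemmas for the infinite-dimensional module $\Omega^{1}(M)$. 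With that correction the remainder of your argument matches the proof the paper has in mind.
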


\begin{proof}
  We apply the same strategy of \cite{Miranda2012} replacing the Sobolev inequalities
  for smooth function by the analogous for differential forms. A key
  point is that those Sobolev norms are invariant by the
  action of the Lie group which is linear.
  The linearity of the action is needed to
  decompose the Hilbert space into spaces which are invariant.

  In our case we can assume that this action is also linear using
  an appropriate $G$-equivariant embedding by virtue of Mostow-Palais theorem
  (\cite{Mostow1957}, \cite{Palais1961})\footnote{Using an orthonormal basis in
  the vector space $E$ for this
  action we can define the corresponding Sobolev norms in the ambient
  spaces provided by the Mostow-Palais embedding theorem. This norm is
  invariant by the action of $G$ (we can even assume $G$ is a subgroup of the orthogonal group).}.
  As it was done in \cite{Miranda2012}, we can check the regularity properties of the homotopy
  operators  with respect to these Sobolev norms and then deduce, as a consequence,
  regularity properties of the initial norms by looking at the real
  part. The proof holds step by step by replacing the standard Sobolev
  inequalities by the ones for differential one-forms.
\end{proof}

\begin{remark}
  If we restrict to exact forms, it follows immediately that
  \begin{equation}
      \partial_{0}(\beta)(X)
      =
      [\alpha_{X},\beta]_{\pi}
      =
      [\dd H_X, \dd f]_{\pi}
      =
      \dd\lbrace H_X, f\rbrace.
  \end{equation}
  Thus, the Chevalley-Eilenberg complex $C^q(\g,
  \Omega^1(M))$ recovers $C^q(\g,C^\infty(M))$.
\end{remark}


\subsection{An abstract normal form for SCI-spaces}
\label{sub:NormalSCI}

  As announced, we need to recall the normal form theorem proved in \cite{Miranda2012}
  for SCI-spaces. SCI-spaces (where SCI stands for \emph{scaled $C^\infty$-type}) are
  a generalization of scaled
  spaces and tame Fréchet spaces. This analytical apparatus is needed to prove normal form theorems
  in the most possible general setting which includes neighbourhood of a point, a compact invariant
  submanifold or a compact manifold. We refer to \cite{Miranda2012} for the basic definitions of
  SCI-spaces, SCI-groups  and SCI-actions. It is good to keep in mind the following archetypical
  example: an example of SCI-spaces is the set of Poisson structures, an example of SCI-group is
  the group of diffeomorphism (which can be germified, semilocal or global), and in this case
  an example of SCI-action is the push-forward of a Poisson structure via a diffeomorphism.

  The scheme of proof of a normal form theorem in this abstract setting is the following:
  \begin{enumerate}
    \item ${\mathcal G}$ (for instance diffeomorphisms) which acts on a set ${\mathcal S}$
    (of structures).
    \item We consider the subset of structures in normal forms $\mathcal N$ inside ${\mathcal S}$.
    \item The equivalence of an element in ${\mathcal S}$ to a normal form is understood in
    the following way: for each element $f \in {\mathcal S}$ there is an element
    $\phi \in {\mathcal G}$ such that $\phi\cdot f \in {\mathcal N}$.
  \end{enumerate}
  For practical purposes it is convenient to assume that a ${\mathcal S}$
  (in the example above, the set of Poisson structures) is a subset of a linear space
  $\mathcal T$ (in the example above $\mathcal T$ would be the set of bivector fields).
  The SCI-group $\mathcal G$ acts on $\mathcal T$ and the set of normal forms
  $\mathcal{N} = \mathcal{F}\cap\mathcal{S}$ where $\mathcal F$ is a linear
  subspace of $\mathcal T$.

  The following theorem is an abstract normal form theorem for SCI-spaces.
  In order to apply
  it to particular situations, we need to identify the sets ${\mathcal S}$,
  ${\mathcal F}$, ${\mathcal T}$ and the SCI-group $\mathcal G$ in each case.
  We also need to identify $\mathcal G_0$ a closed subgroup of $\mathcal G$ which is not
  necessarily an SCI-subgroup. As a consequence the equivalence to the normal form is
  given by the existence of $\psi\in\mathcal G$ (or in a closed subgroup)
  $\mathcal G_0$ for each $f\in\mathcal{S}$ such that $\psi\cdot f\in \mathcal{N}$.

  \begin{theorem}[Miranda, Monnier, Zung \cite{Miranda2012}]
    \label{thm:NormalForm}
    Let $\mathcal{T}$ be a SCI-space, $\mathcal{F}$ a SCI-subspace of
    $\mathcal{T}$, and $\mathcal{S}$ a subset of $\mathcal{T}$. Denote
    $\mathcal{N} = \mathcal{F}\cap\mathcal{S}$. Assume that there is a
    projection $\pi: \mathcal{T}\longrightarrow \mathcal{F}$ (compatible
    with restriction and inclusion maps) such that for every $f$ in
    $\mathcal{T}_{k,\rho}$, the element $\zeta(f)=f-\pi(f)$ satisfies
    \begin{equation}
      \label{eqn:proj}
        \|\zeta(f)\|_{k,\rho}
        \leq
        \|f\|_{k,\rho} \Poly(\|f\|_{[(k+1)/2],\rho})
    \end{equation}
    for all $k \in \bbN$ (or at least for all $k$ sufficiently large),
    where $[\cdot]$ is the integer part.

    Let $\mathcal{G}$ be an SCI-group acting on $\mathcal{T}$ by a
    linear left SCI-action and let $\mathcal{G}^0$ be a closed subgroup of
    $\mathcal{G}$ formed by elements preserving $\mathcal{S}$.
    Let $\mathcal{H}$ be a SCI-space and assume that there exist maps
    $\mathsf{H} : \mathcal{S} \longrightarrow \mathcal{H}$ and $\Phi:
    \mathcal{H} \longrightarrow \mathcal{G}^0$ and an integer $s\in\bbN$
    such that for every $0 < \rho \leq 1$, every $f$ in $\mathcal{S}$
    and $g$ in $\mathcal{H}$, and for all $k$ in $\bbN$ (or at least for
    all $k$ sufficiently large) we have the three properties:

    \begin{equation}
    \label{eqn:estimate-H}
      \begin{split}
      \| \mathsf{H}(f) \|_{k,\rho} &\leq \|\zeta(f)\|_{k+s,\rho}
    \Poly(\|f\|_{[(k+1)/2]+s,\rho})   \\
     &+ \|f\|_{k+s,\rho}\|\zeta(f)\|_{[(k+1)/2]+s,\rho}
    \Poly(\|f\|_{[(k+1)/2]+s,\rho}) \ ,
      \end{split}
    \end{equation}

    \begin{equation}
      \label{eqn:estimate-Exp}
        \| \Phi(g) - \id \|_{k,\rho'}
        \leq
        \| g \|_{k+s,\rho}\Poly(\|g\|_{[(k+1)/2]+s,\rho})
    \end{equation}
    and
    \begin{equation}
    \label{eqn:estimate-Exp-bis}
      \begin{split}
        \| \Phi(g_1)\cdot f - \Phi(g_2)\cdot f \|_{k,\rho'}
        & \leq  \| g_1 - g_2 \|_{k+s,\rho} \| f \|_{k+s,\rho}
        \Poly(\| g_1 \|_{k+s,\rho}, \| g_2 \|_{k+s,\rho})  \\
        &  \quad +  \| f \|_{k+s,\rho} \Poly_{(2)}(\| g_1 \|_{k+s,\rho}, \| g_2
        \|_{k+s,\rho})
      \end{split}
    \end{equation}
    if $\rho'\leq \rho(1-c\|g\|_{2,\rho})$ in (\ref{eqn:estimate-Exp})
    and $\rho'\leq \rho(1-c\|g_1\|_{2,\rho})$ and $\rho'\leq
    \rho(1-c\|g_2\|_{2,\rho})$ in (\ref{eqn:estimate-Exp-bis}).

    Finally, for every $f$ in $\mathcal{S}$ denote $\phi_f = \id + \chi_f =
    \Phi\big( \mathsf{H}(f) \big) \in\mathcal{G}^0$ and assume that
    there is a positive real number $\delta$ such that we have the
    inequality
    \begin{equation}
    \label{eqn:estimate-zeta}
        \|\zeta(\phi_f \,.\,f) \|_{k,\rho'}
        \leq
        \|\zeta(f)\|_{k+s,\rho}^{1+\delta}
        Q(\|f\|_{k+s,\rho},\|\chi_f\|_{k+s,\rho}, \|\zeta(f)\|_{k+s,\rho},\|f\|_{k,\rho})
    \end{equation}
  (if $\rho'\leq \rho(1-c\|\chi_f\|_{1,\rho})$) where $Q$
  is a polynomial of four variables and whose degree in the first
  variable does not depend on $k$
  and with positive coefficients.
  Then there exist $l\in\bbN$ and two positive constants $\alpha$ and
  $\beta$ with the following property: for all $p \in \bbN \cup
  \{\infty\}, p \geq l$, and for all $f\in\mathcal{S}_{2p-1,R}$ with
  $\|f\|_{2l-1,R}<\alpha$ and $\|\zeta(f)\|_{l,R}<\beta$, there exists
  $\psi\in\mathcal{G}^0_{p,R/2}$ such that $\psi\cdot f\in
  \mathcal{N}_{p,R/2}$.
\end{theorem}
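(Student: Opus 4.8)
The plan is to realise the statement as a Nash--Moser (fast Newton) convergence argument carried out inside the scaled $C^\infty$ framework, with the single estimate (\ref{eqn:estimate-zeta}) playing the role of the quadratic convergence inequality. Given $f\in\mathcal{S}$ with $\zeta(f)$ small, I would build a sequence in $\mathcal{S}$ by setting $f_0=f$ and
\[
    f_{n+1}=\phi_{f_n}\cdot f_n,\qquad \phi_{f_n}=\Phi\bigl(\mathsf H(f_n)\bigr)\in\mathcal{G}^0 ,
\]
so that each $\phi_{f_n}$ preserves $\mathcal{S}$ (hence every $f_n\in\mathcal{S}$) and the partial compositions $\psi_N=\phi_{f_{N-1}}\cdots\phi_{f_0}$ satisfy $\psi_N\cdot f=f_N$. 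The target map is the limit $\psi=\lim_N\psi_N$, and the goal is to show both that this limit exists in $\mathcal{G}^0_{p,R/2}$ and that $\zeta(f_N)\to0$, so that $\psi\cdot f\in\mathcal{F}\cap\mathcal{S}=\mathcal{N}$.

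The heart of the argument is to run the iteration on a doubly indexed scale: a smoothness index $k_n$ that increases towards $p$ and a radius $\rho_n$ that decreases but stays above $R/2$. I would fix a rapidly growing sequence $k_n$ and a shrinking $\rho_n\searrow R/2$ with $\sum_n(\rho_n-\rho_{n+1})$ finite, chosen so that at every step the admissibility condition $\rho_{n+1}\le\rho_n\bigl(1-c\|\chi_{f_n}\|_{1,\rho_n}\bigr)$ required by (\ref{eqn:estimate-zeta}) holds. The estimates (\ref{eqn:estimate-H}) and (\ref{eqn:estimate-Exp}) then bound $\|\mathsf H(f_n)\|$ and $\|\chi_{f_n}\|=\|\phi_{f_n}-\id\|$ in terms of $\|\zeta(f_n)\|$ and $\|f_n\|$ at the cost of a fixed derivative loss $s$, while (\ref{eqn:proj}) keeps $\zeta$ tame so that $\|\zeta(f_n)\|$ is always controlled by $\|f_n\|$. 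The tame interpolation built into the SCI norms---reflected by the half-index $[(k+1)/2]+s$ appearing inside every $\Poly$---is what lets me reabsorb the loss $s$, trading high norms for low norms times a small factor. The decisive input is (\ref{eqn:estimate-zeta}): since it yields $\|\zeta(f_{n+1})\|\lesssim\|\zeta(f_n)\|^{1+\delta}$ up to polynomial factors, once $\|\zeta(f_0)\|$ lies below the threshold $\beta$ the quantity $\|\zeta(f_n)\|$ decays like $\beta^{(1+\delta)^n}$, fast enough to dominate both the loss $s$ and the polynomial factors $Q$.

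Concretely, I would set up a simultaneous induction proving, for a suitable choice of $\alpha$, $\beta$ and threshold $l$, the two bootstrap inequalities $\|\zeta(f_n)\|_{k_n,\rho_n}\le\beta^{(1+\delta)^n}$ and $\|f_n\|_{2k_n-1,\rho_n}\le 2\alpha$, the second keeping the arguments of all polynomials uniformly bounded so that $\Poly(\cdot)$ and $Q(\cdot)$ stay controlled throughout. Feeding these back into (\ref{eqn:estimate-H})--(\ref{eqn:estimate-Exp-bis}) shows $\sum_n\|\chi_{f_n}\|_{k_n,\rho_n}<\infty$, so $\psi_N$ is Cauchy and converges in $\mathcal{G}^0_{p,R/2}$; here (\ref{eqn:estimate-Exp-bis}) is used to compare the successive actions and to pass to the limit on $f$, while $\zeta(f_N)\to0$ forces $\psi\cdot f\in\mathcal{F}$. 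Since $\mathcal{S}$ is preserved along the iteration and the convergence takes place in the relevant scaled space, the limit lands in $\mathcal{N}_{p,R/2}$.

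The main obstacle---and the reason the SCI formalism is set up at all---is the bookkeeping that makes the super-geometric decay of $\zeta$ beat the linear-in-$n$ growth of the required smoothness index together with the shrinking of the radius. One must choose $k_n$, $\rho_n$, $\alpha$ and $\beta$ so that simultaneously the admissibility conditions $\rho'\le\rho(1-c\|\cdot\|)$ are never violated, the prefactors $Q$ and $\Poly$ remain bounded uniformly in $n$, and the loss $s$ at each step is compensated by the gain from the exponent $1+\delta$. Checking that these demands are mutually compatible---that is, that the hypotheses $\|f\|_{2l-1,R}<\alpha$ and $\|\zeta(f)\|_{l,R}<\beta$ really force convergence at every higher index $p\ge l$---is the delicate quantitative core, and is exactly what estimate (\ref{eqn:estimate-zeta}) is engineered to deliver.
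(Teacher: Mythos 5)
First, a point of order: this paper does not prove Theorem \ref{thm:NormalForm} at all --- it is imported verbatim from \cite{Miranda2012} and used as a black box, so the only meaningful comparison is with the proof given in that reference. Your sketch does reproduce the correct overall shape of that proof: a Newton iteration $f_{n+1}=\phi_{f_n}\cdot f_n$ on a doubly indexed scale, a two-level bootstrap in which $\|\zeta(f_n)\|$ decays like $\beta^{(1+\delta)^{n}}$ while a high norm of $f_n$ stays uniformly bounded, radii $\rho_n\searrow R/2$ chosen compatibly with the admissibility conditions $\rho'\leq\rho(1-c\|\cdot\|)$, and convergence of the partial compositions $\psi_N$ in $\mathcal{G}^0_{p,R/2}$.

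There is, however, a genuine gap. You run the iteration with the \emph{unsmoothed} step $\phi_{f_n}=\Phi(\mathsf{H}(f_n))$ and assert that ``tame interpolation'' reabsorbs the loss of $s$ derivatives. Interpolation bounds an intermediate norm by higher and lower ones; it cannot recover derivatives. Since each of (\ref{eqn:estimate-H}), (\ref{eqn:estimate-Exp}) and (\ref{eqn:estimate-zeta}) costs $s$ derivatives per application, after $n$ steps your estimates require control of the data at level $k+ns$, which is unavailable for initial data of finite smoothness $2p-1$ --- and the conclusion is claimed for every finite $p\geq l$, not only $p=\infty$. This is precisely why SCI-spaces carry smoothing operators $S_t$ as part of their defining structure, and why the iteration in \cite{Miranda2012} is actually $\phi_n=\Phi\big(S_{t_n}\mathsf{H}(f_n)\big)$ with $t_n$ growing super-geometrically (the same device already appears in the paper's own summary of the Hamiltonian case, where $\Phi$ is the time-one flow of $X_{S_t(h(\phi))}$). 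It is also the true purpose of hypothesis (\ref{eqn:estimate-Exp-bis}): taking $g_1=S_{t_n}\mathsf{H}(f_n)$ and $g_2=\mathsf{H}(f_n)$, it transfers the quadratic estimate (\ref{eqn:estimate-zeta}) --- which is only assumed for the unsmoothed $\phi_f$ --- to the smoothed step, at the cost of the error $\|S_{t_n}\mathsf{H}(f_n)-\mathsf{H}(f_n)\|$, which the smoothing inequalities make negligible in low norm. Your sketch assigns (\ref{eqn:estimate-Exp-bis}) only a vague role in ``passing to the limit,'' which signals that the derivative-loss mechanism has not actually been engaged; without the smoothing step and the balancing of $t_n$ against the $(1+\delta)$-decay, the induction cannot close.
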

  Here we use the following notation:
  \begin{itemize}
  \item $\Poly(\| f \|_{k,r})$ stands for a polynomial term in $\| f \|_{k,r}$ where the
  polynomial has positive coefficients and does not depend on $f$ (though it may depend on
  $k$ and on $r$ continuously).
  \item The notation $\Poly_{(p)}(\| f \|_{k,r})$, where $p$ is a strictly positive integer,
  denotes a polynomial term in
  $\| f \|_{k,r}$ where the polynomial has positive coefficients and
  does not depend on $f$ (though it may depend on $k$ and on $r$
  continuously) and {\it which contains terms of degree greater or
  equal to} $p$.
  \end{itemize}


\subsection{Rigidity of infinitesimal momentum maps}
\label{sub:rigpre}

  Finally, we can state the main theorem of this paper in which
  we prove the rigidity of pre-Hamiltonian actions of Poisson Lie groups
  on Poisson manifolds with infinitesimal momentum map
  $\alpha: \g\rightarrow \Omega^1(M)$. In Section \ref{sec:ce} we introduced
  the associated Chevalley-Eilenberg complex and the infinitesimal rigidity.
  In the following we use Theorem \ref{thm:NormalForm} to prove the equivalence
  of two close momentum maps.
  In order to prove that our spaces satisfy the hypotheses of Theorem \ref{thm:NormalForm}
  we need some technical lemmas (they are generalizations of the Lemmas
  of Appendix 2 in \cite{Miranda2012}).
  \begin{lemma}
  \label{lem:Estimate1}
  Let $r > 0$ and $0< \eta<1$ be two positive numbers.
  Consider a one-form $\omega$ on a ball $B_{r(1+\eta)}\in \mathbb R^n$
  and a smooth map $\chi: B_r \to \mathbb R^n$ such that $\chi(0) = 0$
  and $\|\chi\|_{1,r}<\eta$. Then the
  composition $(id + \chi^*)\circ\omega$ is a one-form on a ball $B_r$ which
  satisfies the following inequality:
  \begin{equation}
  \label{eq:Estimate1}
      \|(id+\chi^*)\circ\omega\|_{k,r}
      \leq
      \|\omega\|_{k,r(1+\eta)}(1+P_k(\|\chi\|_{k,r}))
  \end{equation}
  where $P_k$ is a polynomial of degree $k$ with vanishing constant
  term (and which is independent of $\omega$ and $\chi$).
\end{lemma}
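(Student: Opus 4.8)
The plan is to estimate the $\mathscr{C}^k$-norm of the pulled-back one-form $(\id+\chi^*)\circ\omega$ on the ball $B_r$ directly in coordinates, by differentiating the composition and controlling each term through the multivariable Faà di Bruno (chain rule) formula. The key structural point is that, writing $\omega=\sum_i \omega_i(x)\,dx^i$, the pullback under the map $x\mapsto x+\chi(x)$ has components that are sums of products $(\omega_j\circ(\id+\chi))\cdot\partial_i(\id+\chi)^j$, so the whole estimate splits into (a) bounding the composition $\omega\circ(\id+\chi)$ and (b) bounding the Jacobian factor $\partial(\id+\chi)=\id+\partial\chi$. Since $\chi(0)=0$ and $\|\chi\|_{1,r}<\eta<1$, the map $\id+\chi$ sends $B_r$ into $B_{r(1+\eta)}$, which is exactly why $\omega$ need only be defined on the larger ball and why the right-hand side carries the norm $\|\omega\|_{k,r(1+\eta)}$.

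First I would record the composition estimate: for a scalar function $u$ on $B_{r(1+\eta)}$, the $\mathscr{C}^k$-norm of $u\circ(\id+\chi)$ is bounded by $\|u\|_{k,r(1+\eta)}$ times a polynomial in $\|\chi\|_{k,r}$ whose constant term is $1$ (the constant term coming from the zeroth-order contribution where no derivative falls on $\chi$). This is the standard Faà di Bruno bookkeeping: each $k$-th order derivative of the composite is a finite sum indexed by partitions, each summand being a derivative of $u$ (of order at most $k$, evaluated on the shifted point) multiplied by a product of derivatives of the components of $\chi$, and the total degree in the $\chi$-derivatives is at most $k$. Applying this componentwise to each $\omega_i$ gives $\|\omega\circ(\id+\chi)\|_{k,r}\leq \|\omega\|_{k,r(1+\eta)}\bigl(1+P^{(1)}_k(\|\chi\|_{k,r})\bigr)$ with $P^{(1)}_k$ of degree $k$ and vanishing constant term.

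Next I would multiply by the Jacobian factor. The entries of the Jacobian of $\id+\chi$ are $\delta^j_i+\partial_i\chi^j$, so their contribution adds at most one to the polynomial degree and introduces an extra factor of the form $1+\|\chi\|_{k,r}$; combining with the algebra (Leibniz) inequality $\|fg\|_{k,r}\leq C_k\|f\|_{k,r}\|g\|_{k,r}$ for the $\mathscr{C}^k$-norm, the product of the composition bound and the Jacobian bound is again of the form $\|\omega\|_{k,r(1+\eta)}\bigl(1+P_k(\|\chi\|_{k,r})\bigr)$ with $P_k$ of degree $k$ and vanishing constant term. The $G$-invariance and Sobolev machinery from Lemma \ref{lem:Homotopy} are not needed here; this is a purely local coordinate estimate on a ball, and the hypothesis $\|\chi\|_{1,r}<\eta$ is used precisely to keep $\id+\chi$ a well-defined map $B_r\to B_{r(1+\eta)}$ and to keep the Jacobian factor uniformly bounded.

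The main obstacle is the careful combinatorial accounting in the Faà di Bruno step: one must verify that the constant term of the resulting polynomial is genuinely $1$ (equivalently, that $P_k$ has no constant term), which is exactly the statement that when $\chi=0$ the bound reduces to $\|\omega\|_{k,r}$, and that the polynomial degree does not exceed $k$ and is independent of the particular $\omega$ and $\chi$. I expect no essential difficulty beyond this bookkeeping, since the estimate is the differential-form analogue of the composition lemma in Appendix~2 of \cite{Miranda2012}; the only new feature is the presence of the covector index and the Jacobian factor, both handled by the product rule as above.
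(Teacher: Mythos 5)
Your proposal is correct and follows essentially the same route as the paper, whose entire proof is a deferral to Lemma B.1 of \cite{Miranda2012} --- i.e.\ exactly the Fa\`a di Bruno composition estimate you spell out, with the constant term $1$ coming from the single partition in which no derivative falls on $\chi$. One small caveat on your Jacobian step: if the pullback factor $\delta_i^j+\partial_i\chi^j$ is genuinely present, then applying Leibniz up to order $k$ produces derivatives of $\chi$ of order $k+1$, so that factor is controlled by $\|\chi\|_{k+1,r}$ rather than $\|\chi\|_{k,r}$; this loss of one derivative is harmless for the SCI iteration (it is absorbed into the shift $s$), but it means the inequality as literally stated holds verbatim only under the pure composition reading of $(\mathrm{id}+\chi^*)\circ\omega$, which is the reading consistent with the cited Lemma B.1.
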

\begin{proof}
  The proof follows the same line of Lemma B.1 in \cite{Miranda2012}
\end{proof}
Before stating the second technical lemma we need to recall some
basic results of Poisson calculus, following \cite{Bhaskara1988}.
In particular, we introduce a Lie derivative $\LL_\alpha$
in the direction of a 1-form $\alpha$. We will see that $\LL_\alpha$
integrate to a flow $\Phi^*_t$ on $\Omega^1(M)$ which preserves the
Lie algebra structure.
Recall from Section \ref{sec:mm} that the space of one-forms on $M$
is endowed with a Lie bracket $[\cdot , \cdot]_\pi$ defined by
Eq.~(\ref{eq:LieForms}). We set
\begin{equation}
  \label{eq:LieDerivativeForms}
    \LL_\alpha{\beta}
    =
    [\alpha , \beta]_\pi
\end{equation}
It is clear that this operation makes $\Omega^1(M)$ into
a $\Omega^1(M)$-module, sice it is a Lie algebra.
Now, let us consider a vector field $\pi^\sharp(\alpha)$
and assume that its flow $\phi_t$ is defined for all $t \in \mathbb{R}$.
\begin{theorem}[Ginzburg, \cite{Ginzburg1996}]
  \label{thm:PoissonFlow}
  There exist families of fiber-wise linear automorphisms $\Phi^*_t$
  of the vector bundle $p: T^*M \to M$ covering $\phi_{-t}$ by
  $p\circ \Phi^*_t = \phi_{-t}$ such that
  \begin{lemmalist}
    \item $\Phi^*_t$ is a flow:
    $$
        \Phi^*_{t_1 + t_2}
        =
        \Phi^*_{t_1}\Phi^*_{t_2}
    $$
    for any $t_1, t_2\in \mathbb{R}$.
    \item For any $\beta\in \Omega^1(M)$ the time-dependent
    form $\beta(t) = \Phi^*_t \beta$ is a unique solution of
    the differential equation
    \begin{equation}
      \label{eq:PoissonFlow}
        \frac{\dd \beta(t)}{\dd t}
        =
        \LL_\alpha \beta(t),
        \quad
        \beta(0)
        =
        \beta
    \end{equation}
  \end{lemmalist}
\end{theorem}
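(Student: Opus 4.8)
The plan is to recognize the operator $\LL_\alpha = [\alpha,\,\cdot\,]_\pi$ as a first-order differential operator on sections of $T^*M$ and to integrate it by the method of characteristics along the flow $\phi_t$ of $X := \pi^{\sharp}(\alpha)$. The starting observation is the Leibniz rule for the Koszul bracket: for $f \in \Cinfty(M)$ and $\beta \in \Omega^1(M)$,
\begin{equation*}
    \LL_\alpha(f\beta) = [\alpha, f\beta]_\pi = f\,[\alpha,\beta]_\pi + \big(\pi^{\sharp}(\alpha)f\big)\beta = f\,\LL_\alpha\beta + (Xf)\beta .
\end{equation*}
Thus $\LL_\alpha$ is $\bbR$-linear and a derivation of the $\Cinfty(M)$-module $\Omega^1(M)$ covering the vector field $X$; equivalently it is a differential operator of order $\le 1$ whose principal symbol is $X$. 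This is exactly the structure whose flow is a one-parameter family of fibre-wise linear bundle automorphisms, generalizing the familiar fact that the ordinary Lie derivative $\LL_X$ integrates to the pullback flow $\phi_t^*$ on one-forms.

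To build $\Phi^*_t$ I would fix an auxiliary linear connection $\nabla$ on $T^*M$. Since both $\LL_\alpha$ and $\nabla_X$ satisfy the same Leibniz rule over $X$, their difference $B := \LL_\alpha - \nabla_X$ is $\Cinfty(M)$-linear, hence a bundle endomorphism of $T^*M$. The evolution equation of ii.),
\begin{equation*}
    \frac{\dd \beta(t)}{\dd t} = \LL_\alpha\,\beta(t) = \nabla_X\,\beta(t) + B\,\beta(t), \qquad \beta(0) = \beta,
\end{equation*}
is therefore a linear transport equation along the integral curves of $X$. Restricting to one such curve $s \mapsto \phi_s(m)$ and trivializing $T^*M$ over it by $\nabla$-parallel transport turns the equation into a linear, time-dependent ordinary differential equation $\dot v(s) = \widetilde B(s)\,v(s)$ in a fixed finite-dimensional vector space. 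Because the right-hand side is linear in $v$, its solution does not blow up and exists on the whole interval on which the characteristic is defined; by the completeness hypothesis on $\phi_t$ this interval is all of $\bbR$. Since the operator is first order, the value of the solution at $m$ depends only on $\beta$ at the single point $\phi_t(m)$, transported along the characteristic; this dependence is fibre-wise linear and invertible and defines the bundle map $\Phi^*_t : T^*M \to T^*M$ with $p\circ\Phi^*_t = \phi_{-t}\circ p$, i.e. a fibre-wise linear automorphism covering $\phi_{-t}$.

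The flow property i.) and the uniqueness asserted in ii.) then both follow from uniqueness of solutions of this autonomous linear equation: since $\LL_\alpha$ does not depend on $t$, for fixed $t_2$ the two curves $t \mapsto \Phi^*_{t+t_2}\beta$ and $t \mapsto \Phi^*_t(\Phi^*_{t_2}\beta)$ solve the same initial value problem, whence $\Phi^*_{t_1+t_2} = \Phi^*_{t_1}\Phi^*_{t_2}$; taking $t_2 = -t_1$ and using $\Phi^*_0 = \id$ shows that $\Phi^*_t$ is invertible. The main point to be careful about is global existence, which is precisely what the completeness of $\phi_t$ buys, turning the characteristic equation into a linear ODE defined for all time; the only other delicate bookkeeping is the sign convention that makes $\Phi^*_t$ cover $\phi_{-t}$ rather than $\phi_t$, which is fixed by matching the construction to the classical case $\LL_\alpha = \LL_X$, $\Phi^*_t = \phi_t^*$. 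I do not expect a genuine obstruction beyond this, since the entire difficulty is concentrated in reducing the abstract equation on sections to a linear ordinary differential equation along the complete characteristics.
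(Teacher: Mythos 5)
This statement is imported verbatim from Ginzburg's paper \cite{Ginzburg1996}; the present paper offers no proof of it, so there is nothing internal to compare against. Your argument is nevertheless correct and is essentially the standard one. The Leibniz identity $\LL_\alpha(f\beta)=f\,\LL_\alpha\beta+(\pi^\sharp(\alpha)f)\,\beta$ does hold for the Koszul bracket (the two extra terms $-\alpha(\pi^\sharp(\beta))\,df$ and $-\pi(\alpha,\beta)\,df$ cancel), so $\LL_\alpha$ is a derivative endomorphism of $T^*M$ over $X=\pi^\sharp(\alpha)$; splitting it as $\nabla_X+B$ via an auxiliary connection and integrating along the (complete) characteristics is exactly the method of characteristics for a linear transport equation, and linearity of the resulting ODE gives global existence, fibre-wise linearity, and invertibility, while time-independence of $\LL_\alpha$ plus uniqueness gives the flow law. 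Ginzburg's own construction phrases the same fact in Lie-algebroid language --- a first-order operator satisfying that Leibniz rule is a linear vector field on the total space of $T^*M$, whose flow is automatically a family of fibre-wise linear bundle automorphisms covering $\phi_{-t}$ --- so your connection-based reduction is a concrete implementation of the same idea rather than a genuinely different route. The only points worth making explicit are smooth dependence of the characteristic ODE on the base point (standard smooth dependence on parameters) and the uniqueness direction of ii.), which requires running the same characteristic argument for an arbitrary solution, not only for the constructed one; neither is an obstruction.
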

From now on we say that the flow $\Phi^*_t$ is a \textbf{Poisson flow}
and it is known that it has many interesting features; here we just
recall the one that will be necessary for our purpose.
\begin{proposition}[Ginzburg, \cite{Ginzburg1996}]
  \label{prop:FlowProp}
  The Poisson flow $\Phi^*_t$ has the following properties:
  \begin{lemmalist}
    \item It preserves the algebra structure:
    \begin{equation}
      \label{eq:FlowAlgebra}
        \Phi^*_t(\beta_1 \wedge \beta_2)
        =
        \Phi^*_t \beta_1 \wedge \Phi^*_t \beta_2
    \end{equation}
    \item $\Phi^*_t : \Omega^1(M) \to \Omega^1(M)$ is a Lie
    algebra morphism:
    \begin{equation}
      \label{eq:FlowLieMorph}
        \Phi^*_t[\beta_1 , \beta_2]_\pi
        =
        [\Phi^*_t\beta_1 , \Phi^*_t\beta_2]_\pi
    \end{equation}
  \end{lemmalist}
\end{proposition}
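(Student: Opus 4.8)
The plan is to prove both identities by the same mechanism. For fixed $\beta_1,\beta_2\in\Omega^1(M)$ I would exhibit the two sides of each formula as solutions of one and the same linear Cauchy problem of the shape $\tfrac{\dd}{\dd t}X(t)=\LL_\alpha X(t)$, sharing their value at $t=0$, and then invoke the uniqueness clause of Theorem~\ref{thm:PoissonFlow}. The only structural inputs are that $[\cdot,\cdot]_\pi$ is a genuine Lie bracket on $\Omega^1(M)$ (so the Jacobi identity is available) and that $\LL_\alpha=[\alpha,\cdot]_\pi$ extends to a degree-zero derivation of the exterior algebra $(\Omega^\bullet(M),\wedge)$, with $\Phi^*_t$ extending accordingly; both belong to the Poisson calculus of \cite{Bhaskara1988} recalled above.

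For part \emph{ii.)}, set $b_i(t)=\Phi^*_t\beta_i$, so that $\tfrac{\dd}{\dd t}b_i=\LL_\alpha b_i=[\alpha,b_i]_\pi$ by Theorem~\ref{thm:PoissonFlow}. Since $[\cdot,\cdot]_\pi$ is $\bbR$-bilinear and continuous, I may differentiate the bracket term by term:
\begin{align*}
  \frac{\dd}{\dd t}\,[\Phi^*_t\beta_1,\Phi^*_t\beta_2]_\pi
  &=[[\alpha,\Phi^*_t\beta_1]_\pi,\Phi^*_t\beta_2]_\pi+[\Phi^*_t\beta_1,[\alpha,\Phi^*_t\beta_2]_\pi]_\pi\\
  &=[\alpha,[\Phi^*_t\beta_1,\Phi^*_t\beta_2]_\pi]_\pi
  =\LL_\alpha[\Phi^*_t\beta_1,\Phi^*_t\beta_2]_\pi,
\end{align*}
where the second equality is exactly the Jacobi identity for $[\cdot,\cdot]_\pi$. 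Thus $t\mapsto[\Phi^*_t\beta_1,\Phi^*_t\beta_2]_\pi$ and $t\mapsto\Phi^*_t[\beta_1,\beta_2]_\pi$ solve the same equation $\tfrac{\dd}{\dd t}X=\LL_\alpha X$ with the common initial value $[\beta_1,\beta_2]_\pi$ at $t=0$; uniqueness in Theorem~\ref{thm:PoissonFlow} yields $\Phi^*_t[\beta_1,\beta_2]_\pi=[\Phi^*_t\beta_1,\Phi^*_t\beta_2]_\pi$.

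For part \emph{i.)}, the same argument works once $\LL_\alpha$ is known to be a derivation of the wedge product, i.e.\ $\LL_\alpha(\omega_1\wedge\omega_2)=\LL_\alpha\omega_1\wedge\omega_2+\omega_1\wedge\LL_\alpha\omega_2$. This reduces to the Leibniz rule over $\Cinfty(M)$, namely $[\alpha,f\beta]_\pi=(\pi^\sharp(\alpha))(f)\,\beta+f[\alpha,\beta]_\pi$, which one checks from Eq.~(\ref{eq:LieForms}) and fixes the extension of $\LL_\alpha$ to all forms. With $b_i(t)=\Phi^*_t\beta_i$ as before, the function $w(t)=b_1(t)\wedge b_2(t)$ then satisfies $\tfrac{\dd}{\dd t}w=\LL_\alpha b_1\wedge b_2+b_1\wedge\LL_\alpha b_2=\LL_\alpha w$, the same equation (now on $\Omega^2(M)$) solved by $t\mapsto\Phi^*_t(\beta_1\wedge\beta_2)$, with identical initial value $\beta_1\wedge\beta_2$; uniqueness gives the identity. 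Equivalently, one may observe that, since $\Phi^*_t$ is a fibre-wise linear automorphism of $T^*M$, its functorial action on $\bigwedge^2T^*M$ is multiplicative by construction, and the derivation property above is precisely what identifies this action with the flow of $\LL_\alpha$ on two-forms.

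The routine bilinear differentiations are harmless; the genuine content lies in the two algebraic facts feeding the computation, so I expect the main obstacle to be making them fully rigorous: for \emph{ii.)} the Jacobi identity of $[\cdot,\cdot]_\pi$ (available here since $\Omega^1(M)$ is a Lie algebra), and for \emph{i.)} the derivation property of $\LL_\alpha$ on $(\Omega^\bullet(M),\wedge)$ together with the extension of Theorem~\ref{thm:PoissonFlow} (existence and, crucially, uniqueness) from one-forms to two-forms. Both are part of the Poisson calculus of \cite{Bhaskara1988}, and once they are in place the uniqueness principle for the linear Cauchy problem closes each argument immediately.
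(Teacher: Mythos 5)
The paper does not prove this proposition at all: it is imported verbatim from Ginzburg \cite{Ginzburg1996} (the flow $\Phi^*_t$ is \emph{defined} there as the fibre-wise linear automorphism of $T^*M$ integrating $\LL_\alpha$, and these two properties are quoted as known features of that construction). So your proposal is not competing with an argument in the text; judged on its own, it is correct and is the standard way one would establish both identities. Part \emph{ii.)} is clean: both $t\mapsto\Phi^*_t[\beta_1,\beta_2]_\pi$ and $t\mapsto[\Phi^*_t\beta_1,\Phi^*_t\beta_2]_\pi$ solve $\tfrac{\dd}{\dd t}X=[\alpha,X]_\pi$ with the same initial value, the derivative computation being exactly the Jacobi identity for $[\cdot,\cdot]_\pi$ (which the paper guarantees via Theorem 4.1 of \cite{Vaisman1994}), and the uniqueness clause of Theorem \ref{thm:PoissonFlow} applies since everything lives in $\Omega^1(M)$. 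Your Leibniz identity $[\alpha,f\beta]_\pi=(\pi^\sharp\alpha)(f)\,\beta+f[\alpha,\beta]_\pi$ does follow from Eq.~(\ref{eq:LieForms}) by a short Cartan-calculus computation (the two $\pi(\alpha,\beta)\,\dd f$ terms cancel), so the derivation property of $\LL_\alpha$ on $(\Omega^\bullet(M),\wedge)$ is available. The only point needing care is the one you flag yourself: for part \emph{i.)} the ODE argument runs on $\Omega^2(M)$, where Theorem \ref{thm:PoissonFlow} as stated gives neither the flow nor uniqueness. Your closing observation is the right repair and is in fact how Ginzburg's construction works: $\Phi^*_t$ is a fibre-wise linear automorphism of $T^*M$, so its induced action on $\bigwedge^2T^*M$ is multiplicative by functoriality, making \emph{i.)} essentially definitional; the ODE/derivation argument then serves only to identify that induced action with the flow of $\LL_\alpha$ on two-forms, which is what the paper actually uses later in the proof of Theorem \ref{thm: riginfPoisson}. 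With that reading, the proof is complete.
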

The definition and the properties of Poisson flows allow us to
prove the following
\begin{lemma}
\label{lem:Estimate2}
  Let $r>0$ and $0<\eta<1$ be two positive numbers.
  With the notations above, we have the two following properties:
  \begin{lemmalist}
    \item For any positive integer $k$ we have
    \begin{equation}
        \| \partial (\alpha-\tilde{\alpha}) \|_{k,r}
        \leq
        c \| \alpha-\tilde{\alpha}\|_{k+1,r}^2 \,,
    \end{equation}
    where $c$ is a positive constant independent of
    $\alpha$ and $\tilde{\alpha}$.
    \item There exists a constant $a > 0$ such that if
    $\| \alpha-\tilde{\alpha} \|_{s+2,r(1+\eta)}< a \eta$, 
    then we have, for any positive integer $k$:
    \begin{equation}
        \| \Phi^*\circ\alpha  - \tilde{\alpha}\|_{k,r}
        \leq
        \| \alpha-\tilde{\alpha} \|_{k+s+2,r(1+\eta)}^2 P(\| \alpha-\tilde{\alpha} \|_{k+s+1,r(1+\eta)})
    \end{equation}
    where $P$ is a polynomial with positive coefficients, independent of
    $\alpha$ and $\tilde{\alpha}$.
  \end{lemmalist}
\end{lemma}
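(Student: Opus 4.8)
The plan is to treat the two parts separately, with part~(i) feeding directly into part~(ii) through the Newton-iteration mechanism.

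For part~(i) the decisive point is an algebraic identity. Writing $\beta = \alpha - \tilde{\alpha}$ and expanding $\partial(\alpha-\tilde{\alpha})$ via the definition~\eqref{eq:DifferentialCE} of $\partial_1$, I would use the Lie-morphism condition~\eqref{eq:LieAlgMorph} for \emph{both} momentum maps, namely $\alpha_{[X,Y]} = [\alpha_X,\alpha_Y]_\pi$ and $\tilde{\alpha}_{[X,Y]} = [\tilde{\alpha}_X,\tilde{\alpha}_Y]_\pi$. A short computation shows that every term linear in $\beta$ cancels, leaving
\[
    \partial(\alpha-\tilde{\alpha})(X\wedge Y)
    =
    [\beta_X,\beta_Y]_\pi .
\]
This exhibits $\partial(\alpha-\tilde{\alpha})$ as genuinely quadratic in $\beta$, which is the source of the squared norm in the statement. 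The estimate then follows from the explicit formula~\eqref{eq:LieForms}: since $[\mu,\nu]_\pi$ is bilinear and involves $\pi^\sharp$, two Lie derivatives and one exterior derivative, it loses at most one derivative in each argument, so a Leibniz bound in the $\mathscr{C}^k$-norm gives $\|[\beta_X,\beta_Y]_\pi\|_{k,r}\leq c\,\|\beta_X\|_{k+1,r}\,\|\beta_Y\|_{k+1,r}$, with $c$ depending only on the fixed Poisson structure $\pi$. Maximizing over a basis of $\g$ yields the claim.

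For part~(ii) I would expand the Poisson flow to first order and isolate the remainder. Set $\gamma = \hc_0(\beta)$, so that $\Phi^*$ is the time-one Poisson flow of $\pi^\sharp(\gamma)$; by Theorem~\ref{thm:PoissonFlow} it obeys $\frac{\dd}{\dd t}\Phi^*_t(\alpha_X) = \LL_\gamma\Phi^*_t(\alpha_X) = [\gamma,\Phi^*_t(\alpha_X)]_\pi$, whose value at $t=0$ is $[\gamma,\alpha_X]_\pi = -[\alpha_X,\gamma]_\pi = -\partial_0(\gamma)(X)$. The homotopy identity $\partial_0\hc_0 + \hc_1\partial_1 = \id$ of Lemma~\ref{lem:Homotopy} gives $\partial_0(\gamma) = \beta - \hc_1(\partial_1\beta)$, and since $\alpha_X - \beta_X = \tilde{\alpha}_X$ the first-order Taylor expansion produces
\[
    \Phi^*(\alpha_X) - \tilde{\alpha}_X
    =
    (\hc_1\partial_1\beta)(X) + R_X ,
\]
with integral remainder $R_X = \int_0^1 (1-t)\,\frac{\dd^2}{\dd t^2}\Phi^*_t(\alpha_X)\,\dd t$. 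Both summands are quadratic in $\beta$: the first because $\partial_1\beta = \partial(\alpha-\tilde{\alpha})$ is $O(\|\beta\|^2)$ by part~(i), after which the homotopy estimate~\eqref{5.7} for $\hc_1$ absorbs it at the cost of $s$ derivatives; the second because differentiating the flow equation twice brings down two copies of $\LL_\gamma = [\gamma,\,\cdot\,]_\pi$, each carrying one power of $\gamma$.

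The smallness hypothesis $\|\alpha-\tilde{\alpha}\|_{s+2,r(1+\eta)}<a\eta$ is exactly what renders the flow admissible: combined with~\eqref{5.7} for $j=0$ it forces $\|\pi^\sharp(\gamma)\|_{1,r}\lesssim\|\gamma\|_{2,r}\lesssim\|\beta\|_{s+2,r(1+\eta)}<\eta$, so the base diffeomorphism $\phi_{-t}$ stays inside $B_{r(1+\eta)}$ over $B_r$ and Lemma~\ref{lem:Estimate1} applies to the composition $\Phi^*_t(\alpha_X)$. I would then bound $R_X$ by a Gronwall argument for $\Phi^*_t$ together with the composition estimate~\eqref{eq:Estimate1}, gathering the polynomial factors into $P$. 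The main obstacle is precisely this final bookkeeping: one must track the derivative loss $s$ uniformly in $k$, verify that $P$ has nonnegative coefficients independent of $\alpha$ and $\tilde{\alpha}$, and confirm that the quadratic gain survives composition with the shrinking diffeomorphism — which is why the norm on the larger ball $B_{r(1+\eta)}$ must dominate the norm on $B_r$ on the right-hand side. Here I would adapt the estimates of Appendix~2 in~\cite{Miranda2012} line by line, substituting the bracket $[\cdot,\cdot]_\pi$ on one-forms and the Poisson flow $\Phi^*_t$ for their function-space analogues.
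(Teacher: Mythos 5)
Your part~(i) coincides with the paper's argument: both use the Lie-morphism identity \eqref{eq:LieAlgMorph} for $\alpha$ and $\tilde\alpha$ to cancel all terms linear in $\beta$ and reduce $\partial\beta(X\wedge Y)$ to $[\beta_X,\beta_Y]_\pi$, then bound the bracket with a one-derivative loss (the paper records the constant as $n(n-1)\|\pi\|_{k,r}$). For part~(ii) you take a genuinely different, though closely related, route. The paper telescopes $\Phi^*\alpha_i-\tilde\alpha_i=\Phi^*\beta_i+(\Phi^*\tilde\alpha_i-\tilde\alpha_i)$, expresses each piece by the fundamental theorem of calculus along the flow, and inserts the homotopy identity $\beta=\partial\hc(\beta)+\hc\partial(\beta)$ inside the integrals, arriving at
\begin{equation*}
\Phi^*\alpha_i-\tilde\alpha_i=\int_0^1\!\!\int_t^1\Phi^*_{\tau}\,[\hc(\beta),\beta_i]_\pi\,\dd\tau\,\dd t+\int_0^1\Phi^*_t\,\hc\partial(\beta)_i\,\dd t,
\end{equation*}
in which, after part~(i), every term is controlled by $\beta$ alone, so the polynomial $P$ is independent of $\alpha$ and $\tilde\alpha$ as stated. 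You instead Taylor-expand $\Phi^*_t(\alpha_X)$ at $t=0$ to second order; the first-order term is killed by the same homotopy identity up to $\hc_1\partial_1\beta(X)$, which part~(i) and \eqref{5.7} make quadratic, and the remainder $R_X=\int_0^1(1-t)\,\LL_{\hc(\beta)}^2\Phi^*_t(\alpha_X)\,\dd t$ is quadratic in $\hc(\beta)$. Both versions rest on the same three ingredients (part~(i), the homotopy estimate \eqref{5.7}, and the composition estimate of Lemma~\ref{lem:Estimate1} on the shrinking ball), and your discussion of the smallness hypothesis is exactly the admissibility condition the paper uses implicitly. The one point to repair is that your remainder carries the factor $\Phi^*_t(\alpha_X)$, so the bound it yields is $\|\beta\|^2\,\|\alpha\|\,P(\cdots)$ rather than one with $P$ independent of $\alpha$; you should either substitute $\alpha_X=\tilde\alpha_X+\beta_X$ and absorb the $\tilde\alpha$-dependence into the data permitted by \eqref{eqn:estimate-zeta} (harmless for the application, where $Q$ may depend on $\|f\|$), or adopt the paper's decomposition, which sidesteps the issue entirely.
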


\begin{proof}
  \begin{lemmalist}
  \item Let us consider a basis $\{X_1,\hdots,X_n\}$ of the Lie algebra $\g$
  and the structure constants of the Lie algebra $c_{ij}^k$ defined by
  $$
      [X_i, X_j]
      =
      \sum_{k=1}^n c_{ij}^k X_k.
  $$
  Let us denote by $\alpha_i$ the one-form $\alpha_{X_i}$ associated to the
  element $X_i\in \g$ and by $\beta$ the difference $\alpha-\tilde{\alpha}$ of
  two infinitesimal momentum maps (with respect to the same Poisson structure).
  Using the definition of the Chevalley-Eilenberg differential $\partial$ introduced
  in Section \ref{sec:ce}, we have:
  \begin{equation}
      \partial \beta (X_i\wedge X_j)
      =
      [ \alpha_{i}, \beta_{j} ]_{\pi} - [ \alpha_{j}, \beta_{i} ]_{\pi}-\beta([X_i, X_j]),
  \end{equation}
  where $[\cdot, \cdot]_\pi$ is the Lie bracket induced on $\Omega^1(M)$
  by the Poisson structure $\pi$ on $M$. This allows us to write the following equality:
  \begin{equation}
      [ \beta_{i}, \beta_{j} ]_{\pi}
      =
      [ \alpha_{i}, \alpha_{j} ]_{\pi} - [ \alpha_{i},  \tilde{\alpha}_{j} ]_{\pi}
      -
      [ \tilde{\alpha}_{i}, \alpha_{j} ]_{\pi} + [ \tilde{\alpha}_{i}, \tilde{\alpha}_{j} ]_{\pi}
  \end{equation}
  Since $\alpha$ and $\tilde{\alpha}$ are infinitesimal momentum maps, we can use
  Eq.~(\ref{eq:LieAlgMorph}) and we get
  \begin{equation}
      [\alpha_{i}, \alpha_{j} ]_{\pi}
      =
      \sum_{k=1}^n c_{ij}^k\alpha_k
      \quad\text{and}\quad
      [\tilde{\alpha}_{i}, \tilde{\alpha}_{j} ]_{\pi}
      =
      \sum_{k=1}^n c_{ij}^k \tilde{\alpha}_k
  \end{equation}
  Therefore, we obtain:
  \begin{equation}
      \partial \beta (X_i\wedge X_j)
      =
      [ \beta_{i}, \beta_{j} ]_{\pi} \,.
  \end{equation}
Finally, we just write the following estimates :
\begin{equation}
\| \partial \beta \|_{k,r}  \leq  n(n-1) \| \pi \|_{k,r} \| \beta
\|_{k+1,r}^2 .
\end{equation}

  \item The difference $\beta = \alpha-\tilde{\alpha}$ can be seen
  as an 1-cochain in the Chevalley-Eilenberg complex
  $C^\bullet (\g,\Omega^1(M))$. Thus, $\hc(\beta)$ is an element
  of $\Omega^1(M)$ that we can contract with the Poisson structure $\pi$
  to get a vector field $\pi^\sharp(\hc(\beta))$. Let $\Phi_t^*$ be
  the associated Poisson flow introduced in Theorem
  \ref{thm:PoissonFlow} and consider
  \begin{equation}
  \label{eq:QuadError}
      \Phi^*\alpha_i - \tilde{\alpha}_i
      =
      \Phi^*\alpha_i - \Phi^*\tilde{\alpha}_i +  \Phi^*\tilde{\alpha}_i- \tilde{\alpha}_i .
  \end{equation}
  Using the definition of Poisson flow given by Eq.~(\ref{eq:PoissonFlow}) we have,
  for each $i\in\{1,\hdots,n\}$,
  \begin{equation}
  \label{eq: tec1}
  \begin{split}
      \Phi^*\tilde{\alpha}_i- \tilde{\alpha}_i
      &\ot{(\ref{eq:PoissonFlow})}{=}
      \int_0^1 \Phi^*_t \LL_{\hc(\beta)} \tilde{\alpha}_i \dd t\\
      &\ot{(\ref{eq:LieDerivativeForms})}{=}
      \int_0^1 \Phi^*_t \left[\hc(\beta), \tilde{\alpha}_i \right] \dd t\\
      &\ot{(\ref{eq:DifferentialCE})}{=}
      -\int_0^1 \Phi^*_t \partial (\hc(\beta))_i \dd t.
  \end{split}
  \end{equation}
  From Lemma \ref{lem:Homotopy}, we have the equality
  \begin{equation}
  \label{eq:Hom}
		\beta
		=
		\partial \hc (\beta) + \hc\partial(\beta),
  \end{equation}
  Then, substituting Eq.~(\ref{eq:Hom}) in Eq.~(\ref{eq: tec1}) we have
  \begin{equation}
  \label{eq: tec2}
      \Phi_t^*(\tilde{\alpha}_i)- \tilde{\alpha}_i
      =
      -
      \int_0^1 \Phi_t^*\beta_i \dd t
      +
      \int_0^1 \Phi_t^*\hc\partial(\beta)_i \dd t
  \end{equation}
  Thus
  \begin{equation}
    \begin{split}
      \Phi^*(\alpha_i)- \tilde{\alpha}_i
      & \ot{(\ref{eq:QuadError})}{=}
      \Phi^*(\alpha_i) - \Phi^*(\tilde{\alpha}_i) + \Phi^*(\tilde{\alpha}_i)- \tilde{\alpha}_i\\
	    & \ot{(\ref{eq: tec2})}{=}
	    \Phi^*\beta_i - \int_0^1 \Phi_t^*\beta_i \dd t + \int_0^1 \Phi_t^*\hc\partial(\beta)_i \dd t \\
	    & \ot{(\ref{eq: tec1})}{=}
	    \int_0^1 \int_t^1  \Phi^*_{\tau} [\hc(\beta), \beta_i] \dd\tau \dd t
	    +
	    \int_0^1 \Phi^*_t \hc\partial(\beta)_i \dd t
	  \end{split}
  \end{equation}
  The first integral can be estimated just using  estimate (\ref{eq:Estimate1}) for the
  Chevalley-Eilenberg differential of the difference of two infinitesimal
  momentum maps.
  To estimate the second integral we first need to apply the estimate for
  the homotopy operator (\ref{5.7}) and then again (\ref{eq:Estimate1}).
  Finally, combining these estimates we have
  \begin{equation}
      \| \Phi^*\circ\alpha  - \tilde{\alpha}\|_{k,r}
      \leq
      \| \alpha-\tilde{\alpha} \|_{k+s+2,r(1+\eta)}^2 P(\| \alpha-\tilde{\alpha} \|_{k+s+1,r(1+\eta)})
  \end{equation}
  \end{lemmalist}
\end{proof}

  These estimates finally allow us to prove our main result.
  \begin{theorem}
  \label{thm: riginfPoisson}
  Let us consider a pre-Hamiltonian action of a semisimple compact Poisson Lie group $(G, \pi_G)$
  on a compact Poisson manifold $(M,\pi)$, generated by an infinitesimal
  momentum map $\alpha$.

  There exist a positive integer $l$ and two positive real numbers
  $a$ and $b$ (with $b<1<a$) such that, if $\tilde{\alpha}$ is
  another infinitesimal momentum map on $M$ with respect to the same Poisson
  structure, satisfying
  \begin{equation}
      \| \alpha - \tilde{\alpha} \|_{2l-1}
      \leq
      a
      \qquad
      \| \alpha - \tilde{\alpha} \|_{l} \leq b
  \end{equation}
  then, there exists a Lie algebra morphism $\Phi: \Omega^1(M)\to \Omega^1(M)$
  preserving the chain map property (\ref{eq:Chain})
  of class $\mathscr{C}^{k}$, for all $k\geq l$, on $M$ such that
  $\Phi(\alpha_X) = \tilde\alpha_X$.
  \end{theorem}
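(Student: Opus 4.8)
The plan is to recognize the desired conjugation as a single application of the abstract normal form Theorem~\ref{thm:NormalForm}, with all of its analytic hypotheses furnished by Lemmas~\ref{lem:Homotopy}, \ref{lem:Estimate1} and~\ref{lem:Estimate2}. First I would fix the SCI-data. Take $\mathcal{T}=C^1(\g,\Omega^1(M))=\Hom(\g,\Omega^1(M))$, the space of candidate infinitesimal momentum maps equipped with the Sobolev and $\mathscr{C}^{k}$-norms of Lemma~\ref{lem:Homotopy}, and let $\mathcal{S}\subset\mathcal{T}$ be the subset of genuine infinitesimal momentum maps, i.e. those satisfying both \eqref{eq:LieAlgMorph} and the cochain condition~\eqref{eq:Chain}. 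I regard $\tilde{\alpha}\in\mathcal{S}$ as a fixed reference and use as coordinate the deviation $f=\alpha-\tilde{\alpha}$, so that $\zeta=\id$, $\mathcal{F}=\{0\}$ and the normal form $\mathcal{N}=\{0\}$ corresponds exactly to the equation $\alpha=\tilde{\alpha}$. For the SCI-group I would take $\mathcal{G}=\mathcal{G}^0$ to be the group of Poisson flows $\Phi^*$ of Theorem~\ref{thm:PoissonFlow}, acting by post-composition, expressed in this coordinate as $\Phi^*\cdot f=\Phi^*\circ\alpha-\tilde{\alpha}$. Since each $\Phi^*$ is a fibre-wise linear operator on $\Omega^1(M)$ and, by Proposition~\ref{prop:FlowProp}, respects the wedge product and the bracket $[\cdot,\cdot]_\pi$, the element $\Phi^*\circ\alpha$ is again an infinitesimal momentum map, so $\mathcal{G}^0$ preserves $\mathcal{S}$.

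Next I would exhibit the Newton data and check the estimates \eqref{eqn:proj}, \eqref{eqn:estimate-H}, \eqref{eqn:estimate-Exp}, \eqref{eqn:estimate-Exp-bis} and \eqref{eqn:estimate-zeta}. The homotopy map is $\mathsf{H}(f)=\hc(f)\in\Omega^1(M)$, the homotopy operator of Lemma~\ref{lem:Homotopy} applied to the deviation, and the ``exponential'' is $\Phi(h)=\phi^1_{\pi^\sharp(h)}$, the time-one Poisson flow of the vector field $\pi^\sharp(h)$; thus $\phi_f=\Phi(\mathsf{H}(f))$ is precisely the conjugating flow generated by $\pi^\sharp(\hc(\alpha-\tilde{\alpha}))$ described in the outline of Section~\ref{sec:Main}. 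Since $\zeta=\id$, the projection estimate \eqref{eqn:proj} is immediate; estimate \eqref{eqn:estimate-H} follows at once from the homotopy bound~\eqref{5.7}; and the two flow estimates \eqref{eqn:estimate-Exp} and \eqref{eqn:estimate-Exp-bis}, which say that the time-one Poisson flow and its action depend tamely, and Lipschitz-continuously, on the generating one-form, follow from the composition estimate~\eqref{eq:Estimate1} combined with the usual Gronwall-type estimates for the flow $\Phi^*_t$, exactly along the lines of the corresponding lemmas in \cite{Miranda2012}. The decisive hypothesis is the quadratic contraction~\eqref{eqn:estimate-zeta}: here $\zeta(\phi_f\cdot f)=\Phi^*\circ\alpha-\tilde{\alpha}$, and this is exactly the bound of Lemma~\ref{lem:Estimate2}(ii), valid with $\delta=1$. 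This is the point where infinitesimal rigidity is used: the vanishing of the first and second cohomology of the Chevalley--Eilenberg complex in the compact semisimple case produces the homotopy operator $\hc$, so that Lemma~\ref{lem:Homotopy} gives $\beta=\partial\hc(\beta)+\hc\partial(\beta)$ for $\beta=\alpha-\tilde{\alpha}$; one step of the Poisson flow integrates away the coboundary $\partial\hc(\beta)\simeq\beta$, while the remainder $\hc\partial(\beta)$ is already quadratically small by Lemma~\ref{lem:Estimate2}(i).

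With every hypothesis verified, I would invoke Theorem~\ref{thm:NormalForm}. It yields an integer $l$ and two positive constants, which I rename $a$ and $b$ with $b<1<a$, so that the boundedness and smallness requirements $\|f\|_{2l-1,R}<a$ and $\|\zeta(f)\|_{l,R}<b$ become exactly the hypotheses $\|\alpha-\tilde{\alpha}\|_{2l-1}\le a$ and $\|\alpha-\tilde{\alpha}\|_{l}\le b$ of the statement; the theorem then produces, for every $p\ge l$, an element $\psi\in\mathcal{G}^0_{p,R/2}$ with $\psi\cdot f\in\mathcal{N}$, that is $\psi(\alpha_X)=\tilde{\alpha}_X$ for all $X\in\g$. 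Setting $\Phi:=\psi$ concludes the argument: being a Poisson flow, $\Phi$ is a Lie algebra morphism of $(\Omega^1(M),[\cdot,\cdot]_\pi)$ by Proposition~\ref{prop:FlowProp}, it is of class $\mathscr{C}^{k}$ for all $k\ge l$ by the regularity part of Theorem~\ref{thm:NormalForm}, and, since it preserves the wedge product and commutes with $\dd$ as a standard feature of the Poisson flows recalled from \cite{Ginzburg1996}, it carries solutions of the chain condition~\eqref{eq:Chain} to solutions, which is the asserted compatibility.

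The part I expect to be genuinely laborious is not the final assembly but the verification of the flow estimates \eqref{eqn:estimate-Exp} and \eqref{eqn:estimate-Exp-bis} in the present framework: one must check that the Poisson flows $\Phi^*_t$ of Theorem~\ref{thm:PoissonFlow} organize into an SCI-group acting tamely on cochains valued in one-forms, and that the time-one map depends tamely and Lipschitz-continuously on its generating one-form. Because we must use the cotangent flow $\Phi^*_t$ rather than a symplectic flow, these estimates cannot be quoted verbatim and have to be re-derived from Lemma~\ref{lem:Estimate1}. This, together with the quadratic estimate of Lemma~\ref{lem:Estimate2}(ii) --- the analytic form of the principle ``infinitesimal rigidity implies rigidity'' --- is where the real content of the proof lies.
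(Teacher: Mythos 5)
Your proposal follows essentially the same route as the paper: the same identification of the SCI-data ($\mathcal{T}$, $\mathcal{S}$, $\mathcal{F}=\mathcal{N}=\{0\}$, $\mathsf{H}=\hc$, $\Phi$ the time-one Poisson flow of $\pi^\sharp(\hc(\beta))$), the same appeal to Lemmas~\ref{lem:Homotopy}, \ref{lem:Estimate1} and~\ref{lem:Estimate2} for the estimates \eqref{eqn:proj}--\eqref{eqn:estimate-zeta}, the same use of Proposition~\ref{prop:FlowProp} to see that the conjugating map is a Lie algebra morphism preserving~\eqref{eq:Chain}, and the same final invocation of Theorem~\ref{thm:NormalForm}. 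The only differences (taking $\tilde{\alpha}$ rather than $\alpha$ as the origin, and restricting $\mathcal{G}$ to the Poisson flows instead of all $\mathscr{C}^k$ self-maps of $\Omega^1(M)$ with $\mathcal{G}_0$ the Lie algebra morphisms) are immaterial.
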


  \begin{proof}
    Here we have just to prove that the hypotheses of Theorem \ref{thm:NormalForm} are satisfied.
    First, we make the following indentifications:
    \begin{itemize}
      \item[--] The SCI-space $\mathcal{T}$ is defined to be the space
    $\mathcal{T}_{k}$ of $\mathscr{C}^{k}$-differentiable maps from
    $\g$ to $\Omega(M)$.
      \item[--] The subset $\mathcal{S}$ is given by the infinitesimal momentum maps.
      \item[--] The origin of the affine space is given by $\alpha$ and
    $\mathcal{F} = \mathcal{N} = {0}$ so
    that the estimate (\ref{eqn:proj}) is obvious.
      \item[--] The SCI-group $\mathcal{G}$ consists of the $\mathscr{C}^{k}$-differentiable
    maps from $\Omega^1(M)$ to itself, where the action is
    $\psi\cdot \alpha = \psi\circ \alpha$, with $\psi \in \mathcal{G}$ and
    $\alpha \in \mathcal{T}$.
      \item[--] The closed subgroup $\mathcal{G}_{0}$ of $\mathcal{G}$ is given by the
    Lie algebra morphisms.
    The elements of $\mathcal{G}_{0}$ preserve $\mathcal{S}$.
      \item[--] The SCI-space $\mathcal{H}$ by the space of one-forms on $M$.
    \end{itemize}
    Let us consider the difference of two infinitesimal momentum maps
    $\beta = \alpha-\tilde{\alpha}$  as a 1-cochain  in the Chevalley-Eilenberg
    complex $C^q(\g, \Omega^1(M))$, i.e. an element in $\mathcal{S}$.
    Thus the image of $\beta$ by the map $\mathsf{H}: \mathcal{S}\to \mathcal{H}$
    is simply $\hc_{0}(\beta)$. Using the estimate of the homotopy
    operator Eq.~(\ref{5.7}), the relation (\ref{eqn:estimate-H}) is obvious.
    As stated in Prop.~\ref{prop:FlowProp} the flow $\phi_t$ of the vector field
    $\widehat{X} = \pi^\sharp(\hc(\beta))$ can be recovered by a fiber-wise
    linear automorphism $\Phi_t^*$ on $\Omega^1(M)$, which is
    a Lie algebra morphism (see Eq.~(\ref{eq:FlowLieMorph})).
    By construction, $\Phi_t^*$ commutes with the differential so it preserves
    Eq.~(\ref{eq:Chain}):
    \begin{align*}
        \dd \tilde\alpha_{X}
        &=
        \dd \Phi_t^* \alpha_X \\
        &=
        \Phi_t^* \dd \alpha_X \\
        &\ot{(\ref{eq:Chain}) }{=}
        \Phi_t^*(\alpha \wedge \alpha \circ \delta(X))\\
        &=
        \Phi_t^*(\alpha_{X_i} \wedge \alpha_{X_j}) \\
        &\ot{(\ref{eq:FlowAlgebra})}{ = }
        \Phi_t^*\alpha_{X_i} \wedge \Phi_t^*\alpha_{X_j}\\
        &=
        \tilde\alpha_{X_i} \wedge \tilde\alpha_{X_j}\\
        &=
        \tilde\alpha \wedge \tilde\alpha \circ \delta(X)
    \end{align*}
    Thus, it provides the map
    $\Phi: \mathcal{H} \to \mathcal{G}_{0}$.
    The estimates (\ref{eqn:estimate-Exp})-(\ref{eqn:estimate-Exp-bis})-(\ref{eqn:estimate-zeta})
    are direct consequences of  Lemmas \ref{lem:Estimate1} and \ref{lem:Estimate2}.
  \end{proof}
  Finally we observe that the equivalence of two infinitesimal momentum maps
  implies the equivalence of the corresponding actions under some assumption.
  \begin{corollary}
    Let $\alpha$ and $\tilde\alpha$ two infinitesimal momentum maps
    generating the fundamental vector fields $\widehat{X} = \pi^\sharp(\alpha_X)$
    and $\widehat{X}'= \pi^\sharp(\tilde\alpha)$ of two different actions, resp.
    Under the assumptions of Theorem \ref{thm: riginfPoisson}, if $\alpha_X$ vanishes
    on the symplectic leaves, then $\widehat{X}'= \phi^*\widehat{X}$.
  \end{corollary}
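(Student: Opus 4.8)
The plan is to show that the Lie algebra morphism $\Phi$ produced by Theorem \ref{thm: riginfPoisson} is covered by a genuine diffeomorphism $\phi$ of $M$, and then to transport the relation $\Phi(\alpha_X)=\tilde\alpha_X$ from one-forms to vector fields by means of the anchor $\pi\sh$. Recall that $\Phi$ is built in the proof of Theorem \ref{thm: riginfPoisson} as a limit of compositions of time-one Poisson flows $\Phi^*_t$ of the kind described in Theorem \ref{thm:PoissonFlow}; by that theorem each $\Phi^*_t$ is a fibre-wise linear automorphism of $T^*M$ covering the flow $\phi_{-t}$ of the vector field $\pi\sh(\hc(\beta_n))$ on $M$, and by Proposition \ref{prop:FlowProp} it is moreover a Lie algebra morphism of $\Omega^1(M)$. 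Since $M$ is compact these flows are complete, and the estimates of Lemmas \ref{lem:Estimate1} and \ref{lem:Estimate2} that control the convergence of $\Phi$ simultaneously control the convergence of the underlying diffeomorphisms; I would let $\phi$ denote the resulting diffeomorphism of $M$ that $\Phi$ covers.

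The heart of the argument is the intertwining of $\Phi$ with $\pi\sh$, which I would first establish for a single Poisson flow. Fix a one-form $\gamma$, put $V=\pi\sh(\gamma)$ and let $\psi_t$ be its flow, so that $\Phi^*_t\beta$ solves $\tfrac{\dd}{\dd t}\beta(t)=\LL_\gamma\beta(t)=[\gamma,\beta(t)]_\pi$. Setting $W(t)=\pi\sh(\Phi^*_t\beta)$ and using that $\pi\sh\colon(\Omega^1(M),[\cdot,\cdot]_\pi)\to(\Vect(M),[\cdot,\cdot])$ is a Lie algebra homomorphism (Theorem 4.1 in \cite{Vaisman1994}), one obtains
\begin{equation*}
    \frac{\dd W(t)}{\dd t}
    =
    \pi\sh\big([\gamma,\Phi^*_t\beta]_\pi\big)
    =
    [\pi\sh(\gamma),\pi\sh(\Phi^*_t\beta)]
    =
    [V,W(t)],
    \qquad W(0)=\pi\sh(\beta).
\end{equation*}
By uniqueness of solutions of this linear ODE, $W(t)=(\psi_{-t})_*\,\pi\sh(\beta)=\psi_t^{*}\,\pi\sh(\beta)$, that is, $\pi\sh\circ\Phi^*_t=\psi_t^{*}\circ\pi\sh$. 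Composing the finitely many flows used at each Newton step and passing to the limit then yields the intertwining relation $\pi\sh\circ\Phi=\phi^{*}\circ\pi\sh$.

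Applying this identity to $\alpha_X$ and using $\Phi(\alpha_X)=\tilde\alpha_X$ from Theorem \ref{thm: riginfPoisson} gives
\begin{equation*}
    \widehat{X}'
    =
    \pi\sh(\tilde\alpha_X)
    =
    \pi\sh\big(\Phi(\alpha_X)\big)
    =
    \phi^{*}\big(\pi\sh(\alpha_X)\big)
    =
    \phi^{*}\widehat{X},
\end{equation*}
which is the desired conclusion. In this scheme the hypothesis that $\alpha_X$ vanishes on the symplectic leaves enters when one wants to upgrade the intertwining $\pi\sh\circ\Phi=\phi^{*}\circ\pi\sh$ to the stronger statement that $\Phi$ restricts to the honest cotangent pullback by $\phi$ on the forms $\alpha_X$: it controls the extra terms $-\LL_{\pi\sh(\alpha_X)}\gamma-\dd(\pi(\gamma,\alpha_X))$ that, by formula (\ref{eq:LieForms}), distinguish the Poisson Lie derivative $\LL_\gamma$ from the ordinary one along $V=\pi\sh(\gamma)$, and thereby guarantees that the conjugation of the two infinitesimal actions is implemented by the single geometric diffeomorphism $\phi$.

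The main obstacle I anticipate is precisely the passage to the limit. One must verify that the diffeomorphisms covered by the truncated products of Poisson flows converge, in the $\Cinfty$-topology appropriate to the SCI framework, to a single diffeomorphism $\phi$ of the stated regularity, and that the intertwining relation $\pi\sh\circ\Phi=\phi^{*}\circ\pi\sh$ is preserved under this limit. I would control both points using the same estimates already invoked for $\Phi$ in Lemmas \ref{lem:Estimate1} and \ref{lem:Estimate2}, since those bounds govern the loss of differentiability in the iteration and hence the convergence of the underlying base maps together with their cotangent lifts.
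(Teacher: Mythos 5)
Your argument is correct in substance but takes a genuinely different route from the paper's. The paper's proof is in two steps: it first uses the hypothesis that $\alpha_X$ vanishes on the symplectic leaves to replace the Poisson flow by the ordinary cotangent pullback on these particular forms, $\Phi_t^*\alpha_X=\phi_t^*\alpha_X$ (by formula (\ref{eq:LieForms}) the generator $[\hc(\beta),\alpha_X]_\pi$ differs from $\LL_{\pi^\sharp(\hc(\beta))}\alpha_X$ exactly by the terms $-\LL_{\pi^\sharp(\alpha_X)}\hc(\beta)-\dd\bigl(\pi(\hc(\beta),\alpha_X)\bigr)$, which drop out under that hypothesis); it then applies $\pi^\sharp$ and uses that pullback of one-forms is intertwined with pushforward of vector fields. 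You instead establish the intertwining $\pi^\sharp\circ\Phi_t^*=\psi_t^*\circ\pi^\sharp$ directly from the fact that $\pi^\sharp$ is a Lie algebra homomorphism together with uniqueness for the linear equation $\frac{\dd W}{\dd t}=[V,W(t)]$ (which one can justify by checking that $\psi_{-t}^*W(t)$ is constant). This is clean and correct, and it buys something real: the vanishing hypothesis becomes unnecessary for the stated conclusion, so your closing paragraph, which hunts for a role for that hypothesis, is dispensable --- indeed under the natural reading $\alpha_X\in\ker\pi^\sharp$ the field $\widehat{X}$ is itself zero, so the paper's version of the statement is close to vacuous, whereas yours is not. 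The one weakness you share with the paper is the passage to the limit: Theorem \ref{thm: riginfPoisson} produces $\Phi$ as an SCI-limit of compositions of time-one Poisson flows rather than as a single $\Phi_t^*$, so one must check that the underlying base diffeomorphisms converge and that the intertwining survives the limit; the paper's proof passes over this silently, and you at least flag it, though neither of you supplies the estimate.
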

  \begin{proof}
    From Theorem \ref{thm: riginfPoisson} we know that the
    two momentum maps are equivalent, i.e. $\tilde{\alpha}_X = \Phi_t^*\alpha$.
    It is clear that if $\alpha_X$ vanishes on the symplectic leaves,
    we have $\Phi_t^*\alpha_X = \phi_t^* \alpha_X$, where $\phi_t$
    is the flow on $M$ underlying $\Phi_t^*$ (see Theorem \ref{thm:PoissonFlow}).
    Thus,
    \begin{align*}
        \widehat{X}'
        &=
        \pi^\sharp(\tilde\alpha)\\
        &=
        \pi^\sharp (\Phi_t^*\alpha_X)\\
        &=
        \pi^\sharp (\phi_t^*\alpha_X)\\
        &=
        \phi^t_* \pi^\sharp(\alpha_X)\\
        &=
        \phi^t_*\widehat{X}.
    \end{align*}
  \end{proof}
  Since we have used the apparatus of SCI-spaces,
  the analogue of Theorem \ref{thm: riginfPoisson} also holds in the local and semilocal
  case (neighbourhood of an invariant compact submanifold).
  Thus, in the same spirit of \cite{Miranda2012} we also obtain rigidity for pre-Hamiltonian
  actions for actions in a neighbourhood of an invariant compact submanifold
  (which can be reduced to a single point in the case of fixed points for the action).

\bibliographystyle{plain}
\bibliography{references}

\begin{thebibliography}{10}

\bibitem{Alekseev2013}
A.~Alekseev and E.~Meinrenken.
\newblock Linearization of {P}oisson {L}ie group structures.
\newblock {\em preprint, arxiv:11312.1223v1}, 2013.

\bibitem{Bailey2013}
M.~Bailey.
\newblock Local classification of generalized complex structures.
\newblock {\em J. Differential Geom.}, 95(1):1--37, 2013.

\bibitem{Bailey2014}
M.~Bailey and M.~Gualtieri.
\newblock Local analytic geometry of generalized complex structures.
\newblock {\em preprint}, 2014.

\bibitem{Bhaskara1988}
K.~H. Bhaskara and K.~Viswanath.
\newblock Calculus on poisson manifolds.
\newblock {\em Bull. London Math. Soc.}, 20:68--72, 1988.

\bibitem{Bochner1945}
S.~Bochner.
\newblock Compact groups of differentiable transformations.
\newblock {\em Ann. of Math. (2)}, 46:372--381, 1945.

\bibitem{Conn1985}
J.~F. Conn.
\newblock Normal forms for smooth {P}oisson structures.
\newblock {\em Ann. of Math. (2)}, 121(3):565--593, 1985.

\bibitem{Crainic2004a}
M.~Crainic and R.L. Fernandes.
\newblock {Integrability of {P}oisson brackets}.
\newblock {\em Journal of Differential Geometry}, 66, 2004.

\bibitem{Dodziuk1981}
J.~Dodziuk.
\newblock Sobolev spaces of differential forms and de {R}ham-{H}odge
  isomorphism.
\newblock {\em J. Differential Geom.}, 16(1):63--73, 1981.

\bibitem{Drinfeld1983}
V.G. Drinfeld.
\newblock {Hamiltonian structures on {L}ie groups, {L}ie bialgebras, and the
  geometric meaning of the classical {Y}ang-{B}axter equations}.
\newblock {\em Soviet Math. Dokl.}, 27, 1983.

\bibitem{Enriquez2005}
B.~Enriquez, P.~Etingof, and I.~Marshall.
\newblock Comparison of {P}oisson structures and {P}oisson-{L}ie dynamical
  {$r$}-matrices.
\newblock {\em Int. Math. Res. Not.}, (36):2183--2198, 2005.

\bibitem{Esposito2012}
C.~Esposito.
\newblock {\em {On the classical and quantum momentum map}}.
\newblock PhD thesis, University of Copenhagen, 2012.

\bibitem{Esposito2012a}
C.~Esposito and R.~Nest.
\newblock {Uniqueness of the {M}omentum map}.
\newblock Submitted, http://arxiv.org/abs/1208.1486, 2012.

\bibitem{Ginzburg1996}
V.~L. Ginzburg.
\newblock {Momentum mappings and {P}oisson cohomology}.
\newblock {\em International Journal of Mathematics}, 7(3):329--358, 1996.

\bibitem{Ginzburg1992}
V.~L. Ginzburg and A.~Weinstein.
\newblock Lie-{P}oisson structure on some {P}oisson {L}ie groups.
\newblock {\em J. Amer. Math. Soc.}, 5(2):445--453, 1992.

\bibitem{Goldstein2006}
V.~Gol'dstein and M.~Troyanov.
\newblock Sobolev inequalities for differential forms and
  {$L_{q,p}$}-cohomology.
\newblock {\em J. Geom. Anal.}, 16(4):597--631, 2006.

\bibitem{Guillemin2002}
V.~Guillemin, V.~Ginzburg, and Y.~Karshon.
\newblock {\em Moment maps, cobordisms, and {H}amiltonian group actions},
  volume~98 of {\em Mathematical Surveys and Monographs}.
\newblock American Mathematical Society, Providence, RI, 2002.
\newblock Appendix J by Maxim Braverman.

\bibitem{Guillemin2014}
V.~Guillemin, E.~Miranda, and A.~R. Pires.
\newblock Symplectic and {P}oisson geometry on {$b$}-manifolds.
\newblock {\em Adv. Math.}, 264:864--896, 2014.

\bibitem{Guillemin2015}
V.~Guillemin, E.~Miranda, A.~R. Pires, and G.~Scott.
\newblock Toric actions on b-symplectic manifolds.
\newblock {\em Int. Math. Res. Not.}, 14:5818–5848, 2015.

\bibitem{Hamilton1982}
R.~S. Hamilton.
\newblock The inverse function theorem of {N}ash and {M}oser.
\newblock {\em Bull. Amer. Math. Soc. (N.S.)}, 7(1):65--222, 1982.

\bibitem{Kosmann-Schwarzbach2004}
Y.~Kosmann-Schwarzbach.
\newblock {Lie bialgebras, {P}oisson {L}ie groups and dressing
  transformations}.
\newblock In Y.~Kosmann-Schwarzbach, B.~Grammaticos, and K.~M. Tamizhmani,
  editors, {\em {Integrability of Nonlinear Systems}}. Springer, 2004.

\bibitem{Lu1990}
J-H. Lu.
\newblock {\em {Multiplicative and {A}ffine {P}oisson structure on {L}ie
  groups}}.
\newblock PhD thesis, University of California (Berkeley), 1990.

\bibitem{Lu1991}
J-H. Lu.
\newblock {Momentum mappings and reduction of {P}oisson actions}.
\newblock In {\em {Symplectic geometry, groupoids, and integrable systems}},
  pages 209--226. Math. Sci. Res. Inst. Publ., Berkeley, California, 1991.

\bibitem{Mather1968}
J.~N. Mather.
\newblock Stability of {$C^{\infty }$} mappings. {I}. {T}he division theorem.
\newblock {\em Ann. of Math. (2)}, 87:89--104, 1968.

\bibitem{Miranda2007}
E.~Miranda.
\newblock Some rigidity results for symplectic and {P}oisson group actions.
\newblock In {\em X{V} {I}nternational {W}orkshop on {G}eometry and {P}hysics},
  volume~11 of {\em Publ. R. Soc. Mat. Esp.}, pages 177--183. R. Soc. Mat.
  Esp., Madrid, 2007.

\bibitem{Miranda2012}
E.~Miranda, N.~{Tien Zung}, and P.~Monnier.
\newblock {Rigidity of {H}amiltonian actions on poisson manifolds}.
\newblock {\em Advances in mathematics}, 229:1136--1179, 2012.

\bibitem{Mostow1957}
G.~D. Mostow.
\newblock Equivariant embeddings in {E}uclidean space.
\newblock {\em Ann. of Math. (2)}, 65:432--446, 1957.

\bibitem{Marcut2014}
I.~M\u{a}rcu{\c{t}}.
\newblock Rigidity around {P}oisson submanifolds.
\newblock {\em Acta Math.}, 213(1):137--198, 2014.

\bibitem{Palais1961}
R.~S. Palais.
\newblock Equivalence of nearby differentiable actions of a compact group.
\newblock {\em Bull. Amer. Math. Soc.}, 67:362--364, 1961.

\bibitem{Vaisman1994}
I.~Vaisman.
\newblock {\em {Lectures on the {G}eometry of {P}oisson {M}anifolds}}.
\newblock Birkh\"{a}user, 1994.

\bibitem{Xu2003}
P.~Xu.
\newblock Dirac submanifolds and poisson involution.
\newblock {\em Ann. Sci, Ecole Norm. Sup.}, 36(4):403--430, 2003.

\end{thebibliography}
\end{document}